\newif\iflong
\newcommand{\refappendixthm}[2]{\Cref{#1} of~\Cref{sec:derivations}}
\newcommand{\refappendixfirstoccurrence}[2]{\refappendix{#1}{#2}}
\newcommand{\refappendix}[2]{\Cref{#1}}
\newcommand{\refappendixthm}[2]{\refappendix{#1}{#2}}
\newcommand{\refappendixfirstoccurrence}[2]{the supplementary information (SI) in Section~{#2}}
\newcommand{\refappendix}[2]{Section~{#2} of the SI}
\providecommand{\DontPrintSemicolon}{\dontprintsemicolon}
\newif\ifmaintextcompiled
\newcommand{\rememberlines}{\xdef\rememberedlines{\number\value{AlgoLine}}}
\newcommand{\resumenumbering}{\setcounter{AlgoLine}{\rememberedlines}}
\def\@email#1#2{%
 \endgroup
 \patchcmd{\titleblock@produce}
  {\frontmatter@RRAPformat}
  {\frontmatter@RRAPformat{\produce@RRAP{*#1\href{mailto:#2}{#2}}}\frontmatter@RRAPformat}
  {}{}
}%
\newcommand{\alghidebottomrule}{\renewcommand{\@algocf@post@ruled}}
\newcommand{\alghidetoprule}{\renewcommand{\@algocf@pre@ruled}}
\begin{document}

\title{Periodic Boundary Conditions for Bosonic Path Integral Molecular Dynamics}
\author{Jacob Higer}
\altaffiliation{These authors contributed equally to this work.}
\affiliation {School of Physics, Tel Aviv University, Tel Aviv 6997801, Israel.}
\author{Yotam M. Y. Feldman}%
\altaffiliation{These authors contributed equally to this work.}
\affiliation
{School of Chemistry, Tel Aviv University, Tel Aviv 6997801, Israel.}
\author{Barak Hirshberg}%
\email{hirshb@tauex.tau.ac.il}%
\affiliation{School of Chemistry, Tel Aviv University, Tel Aviv 6997801, Israel.}%
\affiliation{The Ratner Center for Single Molecule Science, Tel Aviv University, Tel Aviv 6997801, Israel.}
\affiliation{The Center for Computational Molecular and Materials Science, Tel Aviv University, Tel Aviv 6997801, Israel.}%

\newcommand{\TODO}[1]{\textcolor{red}{TODO: {#1}}}
\newcommand{\yotam}[1]{\textcolor{blue}{\bf {#1}}}
\newcommand{\jacob}[1]{\textcolor{magenta}{\bf {#1}}}
\newcommand{\notice}[1]{\textcolor{red}{\bf {#1}}}
\newcommand{\barak}[1]{\textcolor{purple}{\bf {#1}}}
\newcommand{\yotamsmall}[1]{{\footnotesize\color{magenta}[{\bf Yotam}: #1]}}
\newcommand{\alternative}[1]{\textcolor{magenta}{#1}}

\newcommand{\set}[1]{\{{#1}\}}
\newcommand{\card}[1]{\left|{#1}\right|}
\newcommand{\sumcond}[2]{\substack{{#1}, \\ {#2}}}
\newcommand{\bigO}{\mathcal{O}}

\newcommand{\fact}[1]{{#1}!}
\newcommand{\pfact}[1]{\fact{\left(#1\right)}}
\newcommand{\Symset}[1]{\mathcal{S}({#1})}
\newcommand{\Symfromto}[2]{\mathcal{S}[{#1},{#2}]}
\newcommand{\SymN}[1]{\mathcal{S}[1,{#1}]}

\newcommand{\permsubgroup}[3]{\mathcal{S}_#1 \left(#2 \rightsquigarrow #3\right)}

\newcommand{\cycleof}[2]{c_{#1}({#2})}
\newcommand{\nextof}[3]{{#3}({#1})={#2}}
\newcommand{\prevparticle}[1]{{#1}^{-}}
\newcommand{\nextparticle}[1]{{#1}^{+}}
\newcommand{\particleup}{v}
\newcommand{\particledown}{u}
\newcommand{\cyclenotate}[1]{({#1})}

\newcommand{\pos}{{\bf R}}
\newcommand{\posbead}[2]{{\bf r}_{#1}^{#2}}
\newcommand{\beadpos}[2]{\posbead{#1}{#2}}
\newcommand{\mass}{m}

\newcommand{\springfrequency}{\omega_P}
\newcommand{\springconstant}{\mass \springfrequency^2}
\newcommand{\springenergyprefix}{\frac{1}{2} \springconstant}

\newcommand{\rdiffsquared}[4]{\left(\beadpos{#1}{#2} - \beadpos{#3}{#4}\right)^2}
\newcommand{\interparticleforce}[1]{\springenergyprefix \sum_{j=1}^{P-1}{\rdiffsquared{#1}{j+1}{#1}{j}}}

\newcommand{\repsym}{G}
\newcommand{\rep}[1]{\repsym[{#1}]}

\newcommand{\boltzmann}[1]{e^{-\beta {#1}}}

\newcommand{\Vfromto}[2]{V_{\text{PBC}}^{[{#1},{#2}]}}
\newcommand{\Vfrom}[1]{\Vfromto{#1}{N}}
\newcommand{\Vto}[1]{\Vfromto{1}{#1}}
\newcommand{\Vall}{\Vfromto{1}{N}}

\newcommand{\originalpotential}[1]{\Vto{#1}}
\newcommand{\originalpotentialnotation}[1]{V_B^{({#1})}}

\newcommand{\Eperm}[1]{E^{#1}}
\newcommand{\Efromto}[2]{E_{\text{PBC}}^{[{#1},{#2}]}}
\newcommand{\Enk}[2]{\Efromto{{#1}-{#2}+1}{{#1}}}
\newcommand{\Einterior}[1]{E_{\textit{int}}^{({#1})}}
\newcommand{\permutationthree}[3]{\left(\begin{smallmatrix} 
1 & 2 & 3\\
{#1} & {#2} & {#3}
\end{smallmatrix}\right)}

\newcommand{\snip}[2]{\rep{#1} \textit{ snips at } {#2}}
\newcommand{\fullsnip}[3]{\rep{#1} \textit{ snip from } {#2} \textit{ to } {#3}}
\newcommand{\project}[2]{{#1}_{| {#2}}}

\newcommand{\beadderive}[3]{\beadderiveexplicit{#1}{#2}{#3}}
\newcommand{\beadderiveexplicit}[3]{\nabla_{\beadpos{#1}{#2}} {#3}}
\newcommand{\beadforce}[3]{- \beadderive{#1}{#2}{#3}}

\newcommand{\Prrep}{\Prerepperm{\sigma}}
\newcommand{\Prerepperm}[1]{\Pr(\rep{#1})}
\newcommand{\Prrepnext}[2]{\Pr\left(\nextof{#1}{#2}{\rep{\sigma}}\right)}

\newcommand{\windingcap}{\mathcal{W}}
\newcommand{\windingcapvector}{\windingcap}
\newcommand{\winding}[2]{\mathbf{w}_{#1}^{#2}}
\newcommand{\rdiffsquaredwinding}[4]{\left(\beadpos{#1}{#2} + \winding{#1}{#2} L - \beadpos{#3}{#4}\right)^2}
\newcommand{\windingsumindex}[2]{\substack{{\winding{#2}{1}=-{#1},\ldots,{#1}} \\ {\ldots} \\ {\winding{#2}{P}=-{#1},\ldots,{#1}}}}
\newcommand{\windingsequence}{\set{\mathbf{w}}}
\newcommand{\windingsequencefromto}[2]{\set{\mathbf{w}}^{[{#1},{#2}]}}
\newcommand{\rdiffsquaredwindingcrazy}[4]{\left(\beadpos{#1}{#2} \overset{\windingcap}{-} \beadpos{#3}{#4}\right)^2}
\newcommand{\localspringenergysym}{\mu}
\newcommand{\edgewinding}[4]{\localspringenergysym(\beadpos{#1}{#2},\beadpos{#3}{#4})} %
\newcommand{\edgewindingspecificdistinguishable}[2]{\edgewindingspecific{#1}{#2}{#1}{{#2}+1}}
\newcommand{\edgewindingspecific}[4]{\localspringenergysym(\beadpos{#1}{#2},\beadpos{#3}{#4},\winding{#1}{#2})}
\newcommand{\edgewindingdistinguishable}[2]{\sum_{\winding{#1}{#2}}{\edgewindingspecificdistinguishable{#1}{#2}}}
\newcommand{\Prereppermwinding}[2]{\Pr(\rep{#1}, {#2})}
\newcommand{\Epermdistinguishable}[1]{\Eperm{#1}}
\newcommand{\Prwinding}[2]{\Pr\left(\winding{#1}{#2}\right)}
\newcommand{\Prrepandwinding}[3]{\Pr\left(\nextof{#1}{#2}{\rep{\sigma}}, \winding{#1}{#3}\right)}
\newcommand{\Prcond}[2]{\Pr\left(#1 \mid #2\right)}
\newcommand{\Prwindingcond}[3]{\Prcond{\winding{#1}{#3}}{\nextof{#1}{#2}{\rep{\sigma}}}}
\newcommand{\windingmic}[2]{(\winding{#1}{#2})^{\ast}}
\newcommand{\Prwindingmic}[2]{\Pr\left(\windingmic{#1}{#2}\right)}

\newcommand{\efromtodbeta}[2]{A^{[{#1}, {#2}]}}
\newcommand{\Einteriordbeta}[1]{A_{\textit{int}}^{({#1})}}

\newcommand{\Prwindingext}[4]{
\frac{
\edgewindingspecific{#1}{#2}{#3}{#4}
}
    {\edgewinding{#1}{#2}{#3}{#4}}
}
\newcommand{\Prwindingextshort}[4]{
\frac{\edgewindingspecific{#1}{#2}{#3}{#4}}
    {
    \left(
    \sum_{\winding{#1}{#2}=-\windingcap}^{\windingcap}{\edgewindingspecific{#1}{#2}{#3}{#4}}
    \right)
    }
}
\newcommand{\Prwindingextshortdistinguishable}[2]{
\frac{\edgewindingspecificdistinguishable{#1}{#2}}
    {\edgewinding{#1}{#2}{#1}{{#2}+1}}
}
\newcommand{\Epermorig}[1]{\Eperm{#1}}
\newcommand{\Vtoorig}[1]{V^{[1,{#1}]}}
\newcommand{\Efromtoorig}[2]{E^{[{#1},{#2}]}}
\newcommand{\Enkorig}[2]{\Efromtoorig{{#1}-{#2}+1}{{#1}}}

\newcommand{\beadposdim}[3]{({r}_{#1}^{#2})_i}
\newcommand{\posbeaddim}[3]{\beadposdim{#1}{#2}{#3}}
\newcommand{\windingcomponent}[3]{\left({w}_{#1}^{#2}\right)_{#3}}
\newcommand{\windingdim}[3]{\windingcomponent{#1}{#2}{i}}
\newcommand{\rdiffsquaredwindingdim}[5]{\left(\beadposdim{#1}{#2}{#5} + \windingdim{#1}{#2}{#5} L - \beadposdim{#3}{#4}{#5}\right)^2}

\newcommand{\Epermidorig}{\Epermorig{\textit{id}}}
\newcommand{\Vdist}{V_{\text{D}}}
\newcommand{\Predistpermwinding}[2]{\Pr({#2})}

\newcommand{\unit}[2]{#1\;\text{#2}}
\newcommand{\density}[1]{#1\;\text{\AA}^{-3}}
\newcommand{\length}[1]{#1\;\text{\AA}}
\newcommand{\massunit}[1]{#1\;\text{u}}
\newcommand{\temperature}[1]{#1\;\mathrm{K}}
\newcommand{\dt}[1]{#1\;\text{fs}}

\newcommand{\addprime}[1]{{#1}^{\prime}}
\newcommand{\iprime}{\addprime{i}}
\newcommand{\jprime}{\addprime{j}}
\newcommand{\kprime}{\addprime{k}}
\newcommand{\uprime}{\addprime{u}}
\newcommand{\vprime}{\addprime{v}}
\newcommand{\ellprime}{\addprime{\ell}}
\newcommand{\sigmaprime}{\addprime{\sigma}}

\newcommand{\weightsym}{\mu}
\newcommand{\effweightsym}{\varphi}
\newcommand{\weight}[1]{\operatorname{\weightsym}\left(#1\right)}
\newcommand{\weightcycle}[2]{\weight{\left[#1, #2\right]}}
\newcommand{\effweight}[2]{\operatorname{\effweightsym}\left(\Symfromto{#1}{#2}\right)}

\newcommand{\Vtospecificwinding}[1]{V^{[1,{#1}]}_{\windingsequence}}
\newcommand{\Vfromspecificwinding}[1]{V^{[{#1},N]}_{\windingsequence}}
\newcommand{\Efromtospecificwinding}[2]{E_{\windingsequence}^{[{#1},{#2}]}}
\newcommand{\Enkspecificwinding}[2]{\Efromtospecificwinding{{#1}-{#2}+1}{{#1}}}
\newcommand{\Prrepnextspecificwinding}[2]{\Pr\left(\nextof{#1}{#2}{\rep{\sigma}} \, \Big{|} \, \windingsequence\right)} 
\begin{abstract}
We develop an algorithm for bosonic path integral molecular dynamics (PIMD) simulations with periodic boundary conditions (PBC) that scales quadratically with the number of particles.
Path integral methods are a powerful tool to simulate bosonic condensed phases, which exhibit fundamental physical phenomena such as Bose--Einstein condensation and superfluidity. 
Recently, we developed a quadratic scaling algorithm for bosonic PIMD, 
but employed an ad hoc treatment of PBC.
Here we rigorously enforce PBC in bosonic PIMD. It requires summing over the spring energies of all periodic images in the partition function, and a naive implementation scales exponentially with the system size.
We present an algorithm for bosonic PIMD simulations of periodic systems that scales only quadratically.
We benchmark our implementation on the free Bose gas and a model system of cold atoms in optical lattices. 
We also study an approximate treatment of PBC based on the  minimum-image convention, and derive a numerical criterion to determine when it is valid.
\end{abstract}

\maketitle

\section{Introduction}

Path integral molecular dynamics (PIMD) is a prominent method for calculating properties of quantum condensed phases at thermal equilibrium~\cite{doi:10.1063/1.446740,doi:10.1063/1.465188}.
In PIMD, thermal expectation values of the quantum system are inferred from molecular dynamics simulations of an extended classical system of ``ring polymers''. The ring polymers are formed by connecting $P$ copies of a quantum particle (``beads'' or ``imaginary time slices'') through harmonic springs whose stiffness depends on the temperature. PIMD simulations are widely used to study quantum liquids and solids~\cite{Markland2018}. 
They are also the basis for several methods for approximating quantum transport coefficients~\cite{Althorpe2021}.

Recently, we developed PIMD simulations of \emph{bosonic} systems.~\cite{hirshberg2019path,quadratic-pidmb} The challenge was that bosonic exchange symmetry required summing over an exponential number of ring polymer configurations in the partition function, each formed by connecting the rings of exchanged particles together.
We overcame this combinatorial explosion, and presented an algorithm that scales quadratically with the size of the system, $\bigO(N^2 + PN)$, where $N$ is the number of particles and $P$ is the number of beads per particle. This quadratic scaling allows efficient simulations of thousands of bosons~\cite{quadratic-pidmb}. This method is the basis also for several techniques studying fermionic systems~\cite{doi:10.1063/5.0008720,doi:10.1063/5.0030760,PhysRevE.106.025309,doi:10.1063/5.0106067,doi:10.1063/5.0093472,PhysRevE.110.065303}.

An important tool in studying quantum condensed phases is the use of \emph{periodic boundary conditions} (PBC) to capture properties of the the bulk through the simulation of a smaller system~\cite{allen2017computer}.
However, previous applications of bosonic PIMD have focused primarily on trapped, non-periodic systems, and the occasional periodic system was handled by ad hoc methods (as we explain in~\Cref{sec:mic-analysis}).
The goal of this paper is to develop an efficient, rigorous bosonic PIMD algorithm for periodic systems. %

How should the PBC modify bosonic path integral simulations?
To properly account for PBC, the partition function should account for the spring interaction of every bead with all periodic images of its neighboring beads~\cite{ceperley1995path,Dominik_Marx_1999,kleinert2009path}.
The challenge is that there are exponentially many different winding configurations that need to be taken into account in the simulation.
In path integral Monte Carlo (PIMC), the problem is addressed by sampling configurations in which the springs are stretched or compressed by integer multiples of the box length, referred to as the \emph{winding} of the springs~\cite{Cao1994,Marx1993,Miller2002,Spada2022,adrian_del_maestro_2022_7271914}.
In practice, the sampling of the windings is done up to some cutoff $\windingcap$, because configurations with highly stretched springs are energetically improbable.

The primary contribution of this paper is the development of an efficient bosonic PIMD algorithm with PBC that scales as $\bigO(\windingcap (N^2 + PN))$. While in PIMC a winding configuration for the path is sampled at each step, in PIMD it is necessary to consider all possible winding configurations simultaneously. Furthermore, 
winding and bosonic exchange are coupled, i.e., the partition function cannot be decomposed into independent sums over windings and permutations since the spring energy depends non-linearly on both.
Despite these challenges, we present an algorithm for periodic bosonic PIMD that has the same quadratic scaling with system size as the previous algorithm, which did not include summation over windings. The new algorithm also has linear scaling with the winding cutoff $\windingcap$. Thus, our algorithm rigorously extends bosonic PIMD to periodic systems.

In~\Cref{sec:background} we present the theoretical background for the paper. \Cref{sec:background-bosonic-pimd,seC:background-quadratic-pimd} summarize the theory of bosonic PIMD with quadratic scaling without accounting for the windings. \Cref{sec:background-pbc} presents the partition function with PBC, and explains the challenges in sampling it.
 \Cref{sec:results} presents the key results of the paper. First, \Cref{sec:linear-distinguishable-pimd-pbc-theory} explains how to perform PIMD simulations with PBC for the case of distinguishable particles.
 Then, we explain the theory behind the new bosonic algorithm in~\Cref{sec:quadratic-bosonic-pimd-pbc-theory}, and in~\Cref{sec:empirical}, we benchmark it on two model systems: the free Bose gas and a model of cold bosons in an optical lattice.
In~\Cref{sec:mic-analysis}, we rigorously examine a simpler approximation for PBC: applying the minimum-image convention (MIC) to the springs. We develop a quantitative criterion to when MIC can be used instead of the windings algorithm.

\section{Background}
\label{sec:background}
\subsection{Bosonic PIMD}
\label{sec:background-bosonic-pimd}
In non-periodic systems, the path integral expression for the canonical partition function of $N$ bosons at thermal equilibrium with inverse temperature $\beta$, in the presence of the physical potential $U$, is~\cite{tuckerman2010statistical}
\begin{equation}
\label{eq:int-of-sum-all-permutations}
    \mathcal{Z}^{\text B} \propto \int{
        d\pos_1 \ldots d\pos_N \, 
        \frac{1}{\fact{N}}
        \sum_{\sigma}{\boltzmann{\left(\Epermorig{\sigma} + \bar{U}\right)}}
    }.
\end{equation}
In~\Cref{eq:int-of-sum-all-permutations}, $\pos_\ell = \beadpos{\ell}{1},\ldots,\beadpos{\ell}{P}$ collectively represents the position vectors of all the $j=1,...,P$ beads of particle $\ell$, and the expression is exact in the limit $P \to \infty$. 
The sum in~\Cref{eq:int-of-sum-all-permutations} is over all permutations of the $N$ bosons.
Each permutation $\sigma$ corresponds to a ring polymer \emph{configuration} in which particles are connected according to the permutation, i.e., the last bead of particle $\ell$ is connected to the first bead of particle $\sigma(\ell)$. The spring energy of a configuration is
\begin{equation}
\label{eq:eperm-orig}
    \Epermorig{\sigma} = \springenergyprefix \sum_{\ell=1}^{N}{ \sum_{j=1}^{P}{\rdiffsquared{\ell}{j}{\ell}{j+1}}}, %
\end{equation}
with $\beadpos{\ell}{P+1} = \beadpos{\sigma(\ell)}{1}$, and $\springfrequency = {\sqrt{P}}/{(\beta \hbar)}$.
Beads  $j$ of different particles interact according to the scaled potential $\bar{U} = \frac{1}{P}\sum_{j=1}^{P}{U\left(\beadpos{1}{j},\ldots,\beadpos{N}{j}\right)}$, which is invariant under particle permutations.

\subsection{Quadratic scaling algorithm for bosonic PIMD}
\label{seC:background-quadratic-pimd}
Due to the sum over an exponential number of permutations, directly sampling the partition function through~\Cref{eq:int-of-sum-all-permutations} is computationally prohibitive.
In previous work~\cite{hirshberg2019path,quadratic-pidmb}, we showed that the same partition function can be expressed in a way that is amenable to efficient computation, by writing
\begin{equation}
\label{eq:int-of-sum-subset-permutations}
    \mathcal{Z}^{\text B} \propto \int{
        d\pos_1 \ldots d\pos_N \, 
        {\boltzmann{\left(\Vtoorig{N} + \bar{U}\right)}}
    },
\end{equation}
where the bosonic spring potential $\Vtoorig{N}$ is defined by the recurrence relation
\begin{equation}
\label{eq:our-forward-potential-recurrence-orig}
\begin{alignedat}{2}
    &\boltzmann{\Vtoorig{N}} &&= \frac{1}{N} \sum_{k=1}^{N}{\boltzmann{\left(\Vtoorig{N-k} + \Enkorig{N}{k}\right)}}.
\end{alignedat}
\end{equation}
The recurrence is terminated by $\Vtoorig{0} = 0$.
In~\Cref{eq:our-forward-potential-recurrence-orig}, $\Enkorig{N}{k}$ is the spring energy of the ring polymer that connects all the beads of particles $N-k+1,\ldots,N$ consecutively,
\begin{equation}
\Enkorig{N}{k} = \springenergyprefix \sum_{\ell=N-k+1}^{N}{\sum_{j=1}^{P}{\rdiffsquared{\ell}{j+1}{\ell}{j}}},
\label{eq:Ek-orig}
\end{equation}
where $\beadpos{\ell}{P+1}=\beadpos{\ell+1}{1}$ except $\beadpos{N}{P+1}=\beadpos{N-k+1}{1}$.

The potential $\Vtoorig{N}$ allows to perform PIMD simulations to sample the bosonic partition function in polynomial scaling~\cite{hirshberg2019path}.
We showed~\cite{quadratic-pidmb} that both the potential and the forces can be computed in quadratic time, $\bigO(N^2 + PN)$.
First, the bosonic spring \emph{potential} is computed, by evaluating the quantities $\Enkorig{N}{k}$ through another recurrence relation, extending cycles one particle at a time. Then, the spring \emph{forces} on all the beads are computed using expressions for the probabilities of the different ways particles can be connected. The full details of the previous algorithm appear in Ref.~\citenum{quadratic-pidmb}.

These equations do not rigorously address PBC. Our central achievement in this paper is developing an efficient PIMD method with quadratic scaling for periodic bosonic systems. Below, we first review the partition function with PBC, and, in~\Cref{sec:results}, derive an efficient PIMD algorithm to sample it.

\subsection{Periodic boundary conditions}
\label{sec:background-pbc}

Pollock and Ceperley~\cite{PollockCeperley1987} showed that imposing PBC alters the path integral expression for the partition function, replacing \Cref{eq:int-of-sum-all-permutations} by
\begin{equation}
\label{eq:z-bosonic-pbc-explicit}    
\mathcal{Z}^{\text B} _{\text{PBC}}
\propto \int_{D(\mathcal{V})}{
        d\pos_1 \ldots d\pos_N \, 
        \frac{1}{\fact{N}}
        \sum_{\sigma}{
            \sum_{\windingsequence}{
            \boltzmann{\left(\Eperm{\sigma,\windingsequence} + \bar{U}\right)}
        }
        }
    },
\end{equation}
where $D(\mathcal{V})$ is the spatial domain defined by the volume of the unit cell.
Throughout this paper, we assume a cubic box with side length $L$.
In~\Cref{eq:z-bosonic-pbc-explicit}, the energy of a configuration is
\begin{equation}
    \label{eq:eperm}
    \Eperm{\sigma,\windingsequence} = \springenergyprefix \sum_{\ell=1}^{N}{ \sum_{j=1}^{P}{\rdiffsquaredwinding{\ell}{j}{\ell}{j+1}}},
\end{equation}
with $\beadpos{\ell}{P+1} = \beadpos{\sigma(\ell)}{1}$.

\begin{figure*}[ht]
    \centering
    \includegraphics[width=0.75\linewidth]{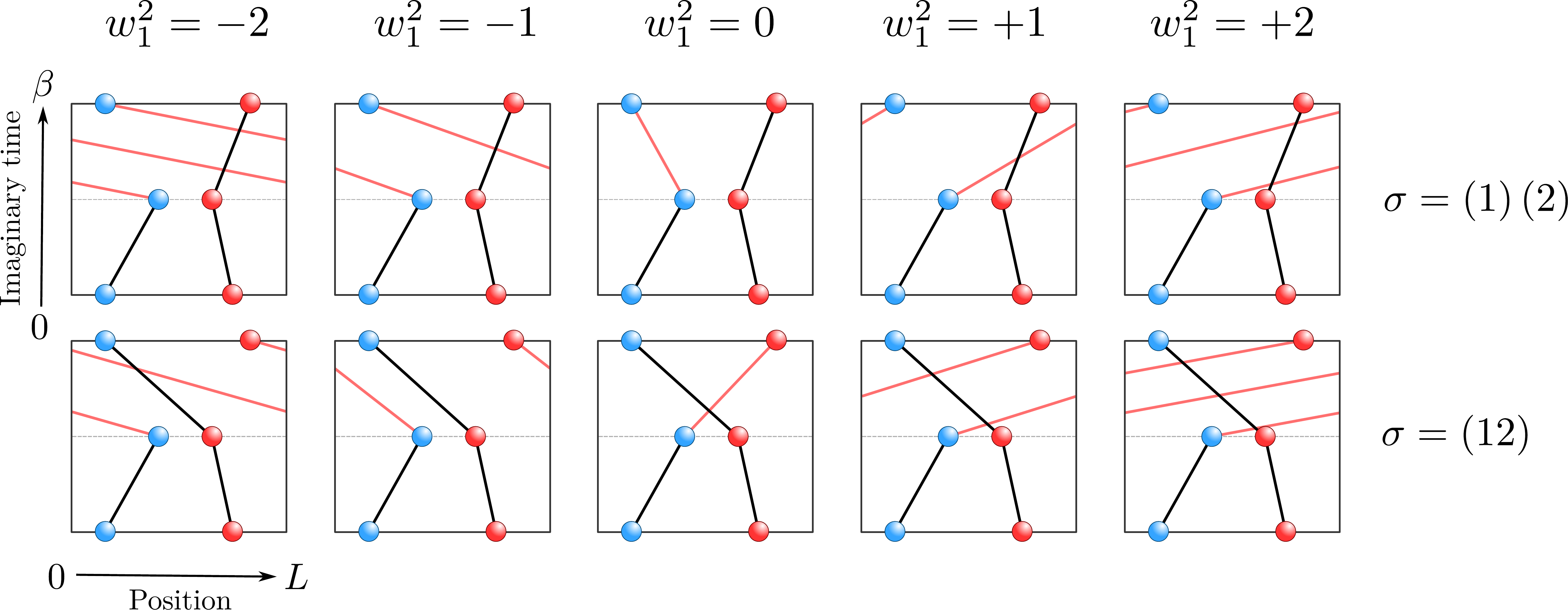}
    \caption{Winding configurations for $N=2$ particles in one-dimension with $P=2$ beads each (blue for $\posbead{1}{j}$ and red for $\posbead{2}{j}$) and winding cutoff of $\windingcap = 2$. In this case, there are only two permutations. The configurations differ by the winding of the spring connecting $\posbead{1}{P}$ to $\posbead{\sigma(1)}{1}$ (red solid line), where $\sigma$ is the permutation.
    }
    \label{fig:bosonic-wind-configuration}
\end{figure*}

\Cref{eq:z-bosonic-pbc-explicit} differs from~\Cref{eq:int-of-sum-all-permutations} in the additional sum over \emph{winding vectors}. We denote the $d$-dimensional winding vector of bead $j$ of particle $\ell$ by $\winding{\ell}{j}$. Its components are integers
expressing how many times the spring that connects a bead to its next neighbor winds around the box along a certain axis (see the top row of \Cref{fig:bosonic-wind-configuration} for an example with $d=1$). The set of \emph{windings} $\windingsequence = \winding{1}{1}, \ldots, \winding{1}{P}, \ldots, \winding{N}{1}, \ldots, \winding{N}{P}$ represents a specific combination of integer components of the winding vectors of all the beads.
Additionally, particles are connected to rings according to the permutation $\sigma$ as in~\Cref{sec:background-bosonic-pimd}.
\Cref{fig:bosonic-wind-configuration} depicts winding configurations of one of the springs for the two permutations of two bosons.

With this notation, the sum over $\windingsequence$ in~\Cref{eq:z-bosonic-pbc-explicit} %
is to be understood as
\begin{equation}
\label{eq:winding-set-sum-definition}
\sum_{\windingsequence} = \sum_{\winding{1}{1}} \dots \sum_{\winding{1}{P}} \dots \sum_{\winding{N}{1}} \dots \sum_{\winding{N}{P}},
\end{equation}
where, 
\begin{equation}
\label{eq:winding-individual-sum-definition}
\sum_{\winding{\ell}{j}} \equiv \sum_{\windingcomponent{\ell}{j}{1} = -\windingcap}^{\windingcap} \dots \sum_{\windingcomponent{\ell}{j}{d} = -\windingcap}^{\windingcap}.
\end{equation}
In practice, the sum over all possible integer components of $\winding{\ell}{j}$ in~\Cref{eq:winding-individual-sum-definition} is capped~\cite{adrian_del_maestro_2022_7271914} by introducing a non-negative integer cutoff $\windingcap$.
This is done because large windings are unfavorable as they result in large spring energies, per~\Cref{eq:eperm}.  

The challenge for PIMD is that the number of configurations in the sum of~\Cref{eq:z-bosonic-pbc-explicit} %
is exponential in both the number of particles and the number of beads, %
$N! \cdot (2\windingcap+1)^{dPN}$. %
The $N!$ term originates from particle exchange, and $(2\windingcap+1)^{dPN}$ from the sum over windings.
These two phenomena are intertwined because
the spring energy of the last bead $j=P$ of each particle $\ell$ depends non-linearly on both the winding vector $\winding{\ell}{P}$ and the permutation $\sigma(\ell)$ (see~\Cref{fig:bosonic-wind-configuration}).
For this reason, it is impossible to apply the previous algorithm to mitigate the combinatorial explosion from particle exchange, and then handle PBC separately.

In this paper, we show how to efficiently include PBC in %
bosonic PIMD. %
Our algorithm scales \emph{quadratically} with system size and only \emph{linearly} with the winding cutoff, $\bigO(\windingcap(N^2 + PN))$. Compared to the previous algorithm, the new algorithm requires changes in all its stages: the definition of the recursive bosonic potential, its evaluation, and the force calculation.
Before we explain the algorithm in~\Cref{sec:quadratic-bosonic-pimd-pbc-theory}, we first present in~\Cref{sec:linear-distinguishable-pimd-pbc-theory} an algorithm that efficiently includes PBC for the case of distinguishable particles.

\section{Results}
\label{sec:results}
\subsection{Linear scaling of distinguishable PIMD with PBC}
\label{sec:linear-distinguishable-pimd-pbc-theory}
We first describe an efficient, $\bigO(\windingcap PN)$, algorithm for distinguishable particles with PBC.
The bosonic algorithm in~\Cref{sec:quadratic-bosonic-pimd-pbc-theory} builds on the ideas presented here and extends them to include particle exchange.

The partition function for distinguishable particles with PBC is obtained by including only the identity permutation in \Cref{eq:z-bosonic-pbc-explicit},
\begin{equation}
\label{eq:z-distinguishable-pbc-explicit}    
\mathcal{Z}_{\text{PBC}}^{\text D} 
\propto \int_{D(\mathcal{V})}{
        d\pos_1 \ldots d\pos_N \, 
            \boltzmann{\left(\Vdist + \bar{U}\right)}
    },
\end{equation}
through a distinguishable spring potential that includes only summation over windings,
\begin{equation}
\label{eq:distinguishable-potential}
    \boltzmann{\Vdist} = \sum_{\windingsequence}{
    \boltzmann{
            \Epermdistinguishable{\windingsequence}
    }}.
\end{equation}
The spring energy of a configuration is then
\begin{equation}
\label{eq:dist-winding-spring-energy}
	\Epermdistinguishable{\windingsequence} = \springenergyprefix \sum_{\ell=1}^{N}{ \sum_{j=1}^{P}{\rdiffsquaredwinding{\ell}{j}{\ell}{j+1}}},
\end{equation}
and $\beadpos{\ell}{P+1} = \beadpos{\ell}{1}$. Note that PBC introduce, in principle, a sum over exponentially many winding configurations while, in the case of non-periodic distinguishable systems, there is only a single configuration.
We will now show how to perform PIMD simulations of distinguishable particles with PBC in linear time
in two steps: evaluating the potential, and evaluating the forces on all the beads.

\subsubsection{Computing the potential in \texorpdfstring{$\bigO(\windingcap PN)$}{O(WPN)} time}
We denote
the %
statistical weight of a single spring with a specific winding by
\begin{equation}
\label{eq:edgewindingspecific}
	\edgewindingspecificdistinguishable{\ell}{j} = \boltzmann{\springenergyprefix \rdiffsquaredwinding{\ell}{j}{\ell}{j+1}}.
\end{equation}
Then, the potential defined by~\Cref{eq:distinguishable-potential,eq:dist-winding-spring-energy} can be written as a sum over products of individual weights.
\begin{equation}
	\boltzmann{\Vdist} = \sum_{\windingsequence}{
		\prod_{\ell=1}^{N}{
			\prod_{j=1}^{P}{
				\edgewindingspecificdistinguishable{\ell}{j}	
			}
		}
	}.
\end{equation}

\begin{figure*}[ht]
    \centering
    \includegraphics[width=0.69\linewidth]{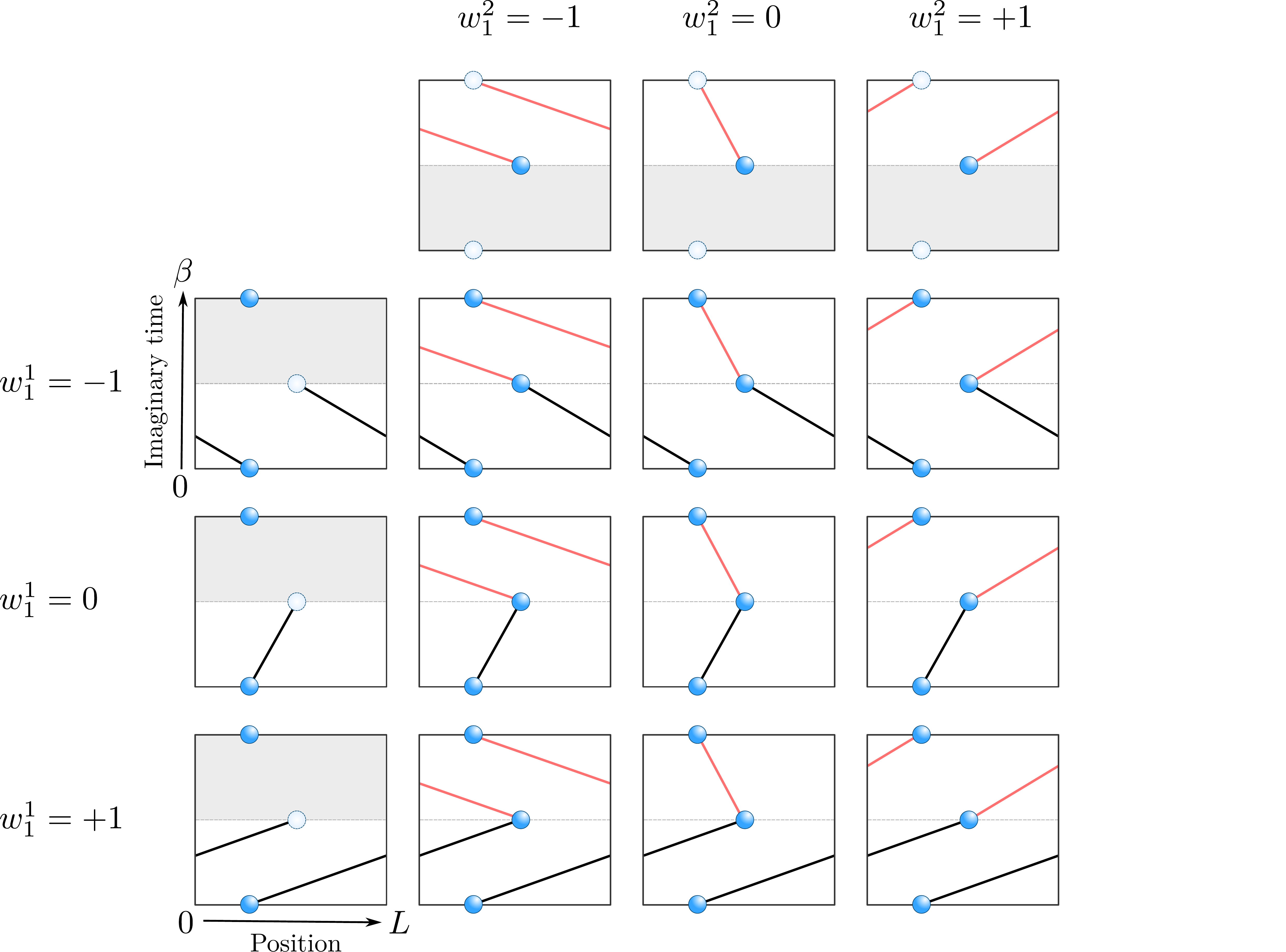}
    \caption{An illustration of calculating the distinguishable potential using \Cref{eq:distinguishable-potential-efficient} 
    in the case of a single particle in one dimension with $P=2$ beads and a winding cutoff of $\windingcap = 1$. In this case, there are $\left(2 \windingcap + 1\right)^{dNP} = 9$ different winding configurations that contribute to the potential.
    The potential is constructed by first summing over windings of individual springs, represented by the ``partial'' configurations (the contribution from shaded areas are excluded) at the top row and left column, and then multiplying their statistical weights. 
    }
    \label{fig:distinguishable-winding-configurations}
\end{figure*}

Because $\edgewindingspecificdistinguishable{\ell}{j}$ depends only on a single winding, we can rewrite the sum of products over the set $\windingsequence$ as a product of sums over an individual winding vector $\winding{\ell}{j}$ 
\begin{equation}
\label{eq:distinguishable-potential-efficient1}
    \boltzmann{\Vdist} = \prod_{\ell=1}^{N}{
                \prod_{j=1}^{P}{
                    \left(
                        \edgewindingdistinguishable{\ell}{j}
                    \right)
                }
            }.
\end{equation}
\Cref{fig:distinguishable-winding-configurations} illustrates how the total contribution of all winding vectors to $V_D$ can be computed by first summing over individual winding vectors and then multiplying the resulting contributions across different beads.
This expression shows that we evaluate the potential in $\bigO(\windingcap PN)$ time: the contribution of each of the $PN$ springs involves a sum over its windings, of which there are $2 \windingcap + 1$ possibilities. In more than one dimension, the sum  
over winding vectors of each spring in~\Cref{eq:distinguishable-potential-efficient1}
splits to the product of sums over each coordinate separately, hence the scaling is $\bigO(\windingcap)$ and not $\bigO(\windingcap^d)$---see~\refappendixfirstoccurrence{sec:multidimensional-winding-sum}{I.C} for details.
In the case of $\windingcap = 0$, the scaling of the algorithm reduces to the standard $\bigO(PN)$ distinguishable PIMD.

To simplify notation in the force derivation in the rest of the paper, 
 we denote the total weight of all the windings of a given spring by
\begin{equation}
\label{eq:edgewinding}
    \edgewinding{u}{j}{v}{k} = \sum_{\winding{u}{j}}{\edgewindingspecific{u}{j}{v}{k}},
\end{equation}
which leads to the following expression for the distinguishable spring potential
\begin{equation}
\label{eq:distinguishable-potential-efficient}
    \boltzmann{\Vdist} = \prod_{\ell=1}^{N}{
                \prod_{j=1}^{P}{
                    \edgewinding{\ell}{j}{\ell}{j+1}
                }
            }.
\end{equation}

\subsubsection{Computing the forces in \texorpdfstring{$\bigO(\windingcap PN)$}{O(WPN)} time}
We now turn to an efficient evaluation of the forces. 
To keep the reasoning as close as possible to the bosonic case, we start by taking the derivative of~\Cref{eq:distinguishable-potential} to see that 
the force can be written as a weighted average over windings:
\begin{equation}
\label{eq:distinguishable-boltzmann-configurations}
\beadforce{\ell}{j}{\Vdist} = {\sum_{\windingsequence}{\Predistpermwinding{\text{id}}{\windingsequence} \cdot \beadforce{\ell}{j}{\Epermdistinguishable{\windingsequence}}}}.
\end{equation}
In~\Cref{eq:distinguishable-boltzmann-configurations},
the force exerted on a bead in a configuration with windings $\windingsequence$ depends only on two winding vectors:
\begin{equation}
\label{eq:force-specific-configuration}
\begin{alignedat}{1}
	\beadforce{\ell}{j}{\Epermdistinguishable{\windingsequence}} = 
		&-\springconstant (\posbead{\ell}{j} - \posbead{\ell}{j-1} - \winding{\ell}{j-1}L) 
	\\
		&-\springconstant (\posbead{\ell}{j} + \winding{\ell}{j}L - \posbead{\ell}{j+1}).
\end{alignedat}
\end{equation}
Additionally, $\Predistpermwinding{\text{id}}{\windingsequence}$ is a Boltzmann probability distribution over the configurations,
\begin{equation}
\label{eq:distinguishable-representative-distribution}
    \Predistpermwinding{\text{id}}{\windingsequence} =
    \frac{
    	\boltzmann{\Epermdistinguishable{\windingsequence}}
    }{
    	\boltzmann{\Vdist}
    }.
\end{equation}

To efficiently evaluate the force in~\Cref{eq:distinguishable-boltzmann-configurations}, we combine the contributions that exert the same force, 
\begin{equation}
\label{eq:distinguiahble-bead-force-main-text}
\begin{alignedat}{1}
    \beadforce{\ell}{j}{\Vdist} = 
        &\sum_{\winding{\ell}{j-1}}{
            \Prwinding{\ell}{j-1} \cdot
            -\springconstant (\posbead{\ell}{j} - \posbead{\ell}{j-1} - \winding{\ell}{j-1}L)}
        \\
        +
        &\sum_{\winding{\ell}{j}}{
            \Prwinding{\ell}{j} \cdot
            -\springconstant (\posbead{\ell}{j} + \winding{\ell}{j}L - \posbead{\ell}{j+1})}
        .
\end{alignedat}
\end{equation}
In~\Cref{eq:distinguiahble-bead-force-main-text}, the probability $\Prwinding{\ell}{j}$ is the sum of the probabilities of all configurations in which the winding of bead $j$ of particle $\ell$ is $\winding{\ell}{j}$. %
If $\Prwinding{\ell}{j-1},\Prwinding{\ell}{j}$ are known for all $\ell$ and $j$, the force can be evaluated simply by summing over a single winding vector for each of the two spring force terms, 
in $\bigO(\windingcap)$ time, resulting in $\bigO(\windingcap PN)$ in total for all beads.

Our goal then is to evaluate the probabilities $\Prwinding{\ell}{j}$ without explicitly summing over the windings of the other beads.
Fortunately, in~\refappendixthm{thm:distinguishable-prob-si}{VI}, we show that the probability $\Predistpermwinding{\text{id}}{\windingsequence}$ decomposes into
\begin{equation}
\label{eq:distinguishable-prob-winding-independnece}
    \Predistpermwinding{\text{id}}{\windingsequence} = \prod_{\ell=1}^{N}{\prod_{j=1}^{P}{\Prwinding{\ell}{j}}},
\end{equation}
and 
\begin{equation}
\label{eq:pr-winding}
    \Prwinding{\ell}{j} = 
    \Prwindingextshortdistinguishable{\ell}{j}.
\end{equation}
Through~\Cref{eq:pr-winding}, it is possible to evaluate each probability in $\bigO(\windingcap)$ time, and all them in $\bigO(\windingcap PN)$ time in total.
As mentioned above, after this step, the forces are evaluated in additional $\bigO(\windingcap PN)$ time through~\Cref{eq:distinguiahble-bead-force-main-text}, resulting in $\bigO(\windingcap PN)$ for the evaluation of the force overall.
In more than one dimension, the sum over the windings in~\Cref{eq:distinguiahble-bead-force-main-text} splits into separate sums over the coordinates, while~\Cref{eq:pr-winding} splits into a product over the different coordinates; hence it is $\bigO(\windingcap)$ and not $\bigO(\windingcap^d)$ (see~\refappendix{sec:multidimensional-winding-sum}{I.C}).
Overall the scaling is $\bigO(\windingcap PN)$ for the force on all beads. 

Thus far, the main observation we used to reduce the scaling was the independence of windings of different springs, as apparent in~\Cref{eq:distinguishable-prob-winding-independnece} and~\Cref{fig:distinguishable-winding-configurations}.
When we turn to bosons, this is no longer the case. In each permutation separately, the same independence property of windings holds; however, the probability of a permutation depends on all windings, coupling them all. This is the main challenge of the next section.

\subsection{Quadratic scaling of bosonic PIMD with PBC}
\label{sec:quadratic-bosonic-pimd-pbc-theory}
Building on the techniques for including periodic boundary conditions in distinguishable PIMD,
we now proceed to describe an efficient algorithm for the bosonic case, achieving quadratic scaling. 

To sample the bosonic partition function of~\Cref{eq:z-bosonic-pbc-explicit}, similarly to the previous bosonic algorithm (see~\Cref{seC:background-quadratic-pimd}), we use an effective ring-polymer potential $\Vall$ defined by the recurrence relation
\begin{equation}
\label{eq:our-forward-potential-recurrence}
\begin{alignedat}{2}
    &\boltzmann{\Vall} &&= \frac{1}{N} \sum_{k=1}^{N}{\boltzmann{\left(\Vto{N-k} + \Enk{N}{k}\right)}}.
\end{alignedat}
\end{equation}
The recursion is terminated by setting $\Vto{0} = 0$.
The crucial difference between~\Cref{eq:our-forward-potential-recurrence} and~\Cref{eq:our-forward-potential-recurrence-orig}, is that the term $\Enk{N}{k}$ includes a sum over the windings,
\begin{equation}
\begin{alignedat}{1}
\boltzmann{\Efromto{\particledown}{\particleup}} &= 
\sum_{\windingsequencefromto{\particledown}{\particleup}}
\boltzmann{
        \springenergyprefix 
        \sum\limits_{\ell=\particledown}^{\particleup}{\sum\limits_{j=1}^{P}{
                \rdiffsquaredwinding{\ell}{j}{\ell}{j+1}
            }
        }
    }
\\
&=
\sum_{\windingsequencefromto{\particledown}{\particleup}}
    \prod_{\ell=\particledown}^{\particleup}{
        \prod_{j=1}^{P}{
            \edgewindingspecific{\ell}{j}{\ell}{j+1}
        }
    }, %
\label{eq:Ek}
\end{alignedat}
\end{equation}
where $\windingsequencefromto{\particledown}{\particleup}$ represents the winding vectors of all the beads of particles $\particledown,\ldots,\particleup$. %
In~\Cref{eq:Ek}, $\beadpos{\ell}{P+1} = \beadpos{\ell+1}{1}$ except $\beadpos\particleup{P+1} = \beadpos{\particledown}{1}$. Following the existing terminology~\cite{quadratic-pidmb}, we call $\Efromto{\particledown}{\particleup}$ the \emph{cycle energies}.
By the same reasoning as in distinguishable particles (\Cref{eq:distinguishable-potential-efficient1}), we can write the cycle energy as 
\begin{equation}
\begin{alignedat}{1}
\boltzmann{\Efromto{\particledown}{\particleup}} &= 
\prod_{\ell=\particledown}^{\particleup}{
    \prod_{j=1}^{P}{
        \edgewinding{\ell}{j}{\ell}{j+1}
    }
}
.
\end{alignedat}
\end{equation}

We show in~\refappendixthm{thm:bosonic-pimd-correctness}{VI} that the potential of~\Cref{eq:our-forward-potential-recurrence} leads to sampling of the correct bosonic partition function with PBC (\Cref{eq:z-bosonic-pbc-explicit}). 
The remainder of this section explains how we efficiently compute the potential $\Vall$ and the forces it induces, $\beadforce{\ell}{j}{\Vall}$ on all the beads.
We note that when $\windingcap =0$, \Cref{eq:Ek} coincides with the previous algorithm.

\subsubsection{Computing the potential in \texorpdfstring{$\bigO(\windingcap(N^2 + PN))$}{O(W(PN+NN))} time}
We start with evaluating the potential $\Vall$ (\Cref{eq:our-forward-potential-recurrence}). 
The first and most significant step is to evaluate the cycle energies.
As in the previous algorithm~\cite{quadratic-pidmb}, quadratic scaling is achieved by extending cycles one particle at a time.
This is done by adding and removing springs from the ring polymer, but we must include all the possible windings associated with the spring, which is achieved by multiplying and dividing by their statistical weights.

\label{sec:e-recursion-full}
We compute the cycle energies by
\begin{equation}
    \label{eq:e-recurrence-step}
        \begin{aligned}
        \boltzmann{\Efromto{\particledown}{\particleup}} = \boltzmann{\Efromto{\particledown+1}{\particleup}}
                            & / %
                            \edgewinding{\particleup}{P}{\particledown+1}{1}
                            \\
                            & \cdot %
                            \edgewinding{\particledown}{P}{\particledown + 1}{1}
                            \\
                            & \cdot \boltzmann{\Einterior{\particledown}}
                            \\
                            & \cdot %
                            \edgewinding{\particleup}{P}{\particledown}{1}
        \end{aligned}
\end{equation}
In~\Cref{eq:e-recurrence-step}, we first remove the contribution of all the windings of the spring that closes the cycle from the last bead of $\particleup$ to the first bead of $\particledown+1$ (see the red dashed lines in the left column of~\Cref{fig:cycle-recurrence-winding}, which depicts the three particle case).
\begin{figure*}[ht]
    \centering
    \includegraphics[width=0.45\linewidth]{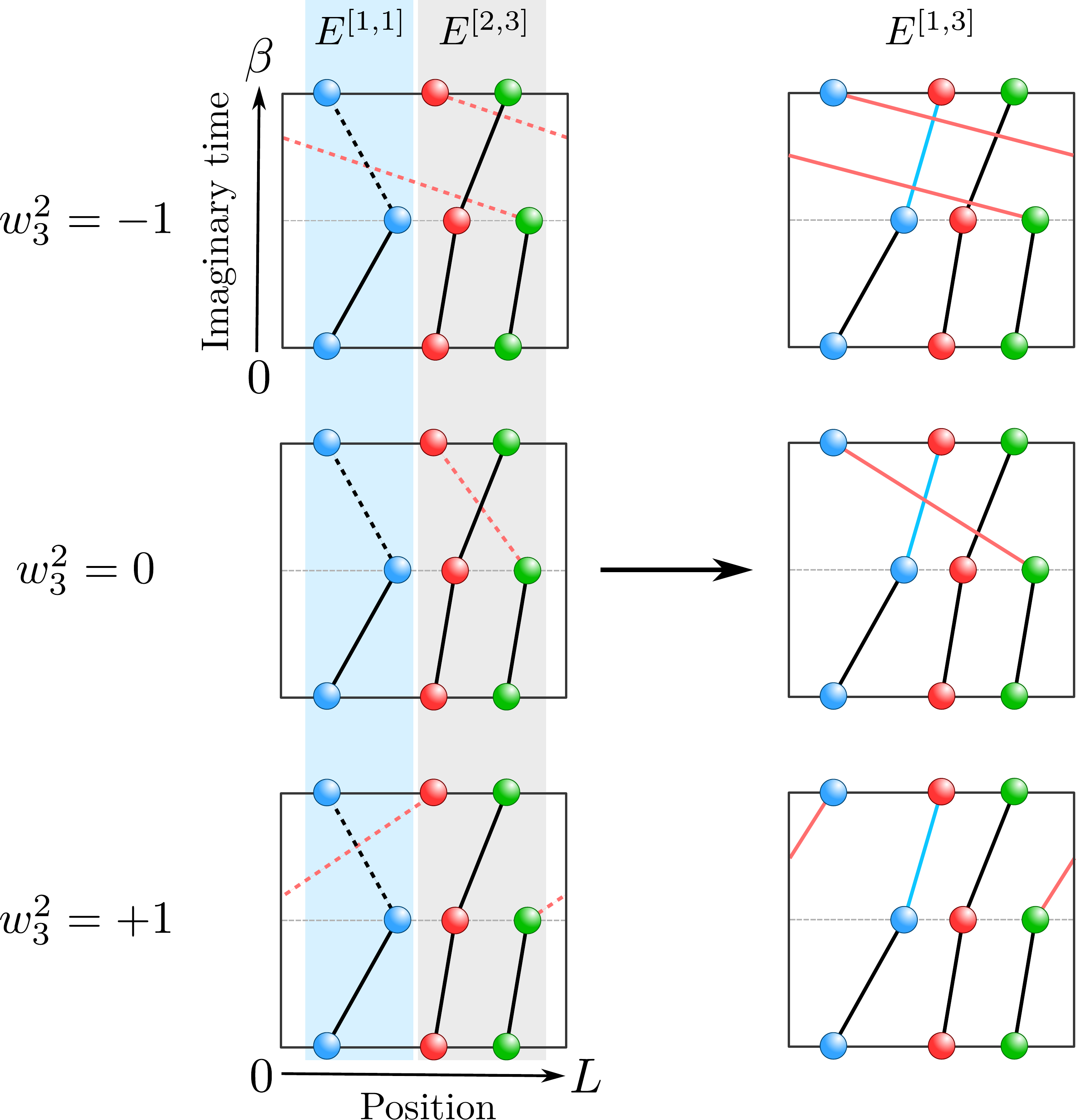}
    \caption{
    Illustration of the idea behind~\Cref{eq:e-recurrence-step}, in the case of three bosons and $P=2$, for $u=1$ and $v=3$. For simplicity, we focus only on the exterior spring when presenting the different winding configurations for a cutoff of $\windingcap = 1$. The left column depicts some of the winding configurations that contribute to $\Efromto{1}{1}$ and $\Efromto{2}{3}$. The red and black dashed lines indicate springs that must be removed. The red and cyan solid lines represent the springs that must be added. In the left column, the separate cycles are highlighted in different colors for emphasis.
    }
    \label{fig:cycle-recurrence-winding}
\end{figure*}
We then add all the windings of the spring that connects the last bead of the new particle $\particledown$ to the beginning of the previous cycle, the first bead of particle $\particledown + 1$ (corresponding to cyan solid lines in the right column of~\Cref{fig:cycle-recurrence-winding}). Then, we add the windings of all the interior springs of $\particledown$, as we explain below (not shown in~\Cref{fig:cycle-recurrence-winding}). Finally, we add all the windings of the spring that now closes the cycle from the last bead of particle $\particleup$ to the first bead of $\particledown$ (red solid line in the right column of~\Cref{fig:cycle-recurrence-winding}). 

The contribution of the windings of the interior springs, $\boltzmann{\Einterior{\particledown}}$, is defined by
\begin{equation}
\label{eq:e-interior}
    \boltzmann{\Einterior{\particledown}} = 
    \prod_{j=1}^{P-1}{\edgewinding{\particledown}{j}{\particledown}{j+1}},
\end{equation}
which includes all the springs of particle $\particledown$ that do not depend on the connectivity of $\particledown$ in the cycle.
The recurrence of~\Cref{eq:e-recurrence-step} is terminated by cycles of just one particle:
\begin{equation}
\label{eq:e-base}
    \boltzmann{\Efromto{\particledown}{\particledown}} = \edgewinding{\particledown}{1}{\particledown}{P} \cdot \boltzmann{\Einterior{\particledown}}.
\end{equation}
The manipulations performed in~\Cref{eq:e-recurrence-step} rely on the fact that it is possible to sum over the winding vectors of springs of each cycle separately. 

With these expressions, evaluating all the cycle energies requires $\bigO(\windingcap(N^2 + PN))$. The most time consuming part is evaluating~\Cref{eq:edgewinding} for $\bigO(PN)$ internal springs and $\bigO(N^2)$ springs between the last and the first bead of each pair of particles, and each such evaluation requires $\bigO(\windingcap)$ time. As mentioned above, even in more than one dimension, the scaling is $\bigO(\windingcap)$ and not $\bigO(\windingcap^d)$ (see~\refappendix{sec:multidimensional-winding-sum}{I.C}). Once the statistical weights are known, evaluating the contribution of the interior springs according to~\Cref{eq:e-interior} takes $\bigO(P)$ time for each of the $N$ particles; evaluating \Cref{eq:e-base} is $\bigO(1)$ for each of the $N$ single-particle cycles; using \Cref{eq:e-recurrence-step} takes additional $\bigO(1)$ for each of the $\bigO(N^2)$ cycles.
After evaluating the cycle energies, evaluating the potentials $\Vto{1},\ldots,\Vto{N}$ of~\Cref{eq:our-forward-potential-recurrence} takes only additional $\bigO(N^2)$ time ($\bigO(N)$ for each potential). Thus the potential $\Vall$ is computed in $\bigO(\windingcap(N^2 + PN))$ overall.
We now turn to computing the forces, in quadratic time as well.

\subsubsection{Computing the force in \texorpdfstring{$\bigO(\windingcap(N^2 + PN))$}{O(W(PN+NN))} time}
To compute the force induced by $\Vall$, we first express the force as a \emph{weighted average} over the configurations stemming from different permutations and windings.
In~\refappendixthm{thm:bosonic-potential-sum-representatives}{VI}, we show that $\Vall$ can be written as
\begin{equation}
\label{eq:vall-sum-representatives}
    \boltzmann{\Vall} = 
        \frac{1}{\fact{N}} \sum_{\sigma}{
            \sum_{\windingsequence}{
            \boltzmann{\Eperm{\rep{\sigma}, \windingsequence}}
        }
        },
\end{equation}
and that this is equivalent to~\Cref{eq:our-forward-potential-recurrence}.

This is similar to the previous bosonic algorithm, except that the choice of winding vectors generates more configurations for every permutation. In~\Cref{eq:vall-sum-representatives}, the function $\repsym$ replaces some permutations by others, because, as in the previous algorithm, not all permutations directly appear in $\Vall$; the definition is the same as in the previous algorithm~\cite{quadratic-pidmb}, unaltered by PBC. %

By taking the derivative of~\Cref{eq:vall-sum-representatives}, the force can be written as a weighted average over permutations and windings,
\begin{equation}
\label{eq:boltzmann-configurations}
\beadforce{\ell}{j}{\Vall} = \sum_{\sigma}{\sum_{\windingsequence}{\Prereppermwinding{\sigma}{\windingsequence} \cdot \beadforce{\ell}{j}{\Eperm{\rep{\sigma}, \windingsequence}}}},
\end{equation}
where
\begin{equation}
\label{eq:representative-distribution}
    \Prereppermwinding{\sigma}{\windingsequence} 
    =
    \frac{
        \boltzmann{
            \Eperm{\rep{\sigma},\windingsequence}
        }
    }{
        \fact{N} \cdot \boltzmann{\Vall}
    },
\end{equation}
is a Boltzmann distribution over the configurations.
The force in a configuration $\beadforce{\ell}{j}{\Eperm{\rep{\sigma}, \windingsequence}}$ is the same as in~\Cref{eq:force-specific-configuration}, except using $\beadpos{\ell}{P + 1} = \beadpos{\sigma(\ell)}{1}$, according to the permutation.

Computing the force on a bead efficiently is based on grouping the configurations that exert the same force on the bead. As in the previous bosonic algorithm, we separate the force evaluation for the first and last beads of each particle, referred to as \emph{exterior beads}, and the rest of the beads, which are referred to as \emph{interior beads}.

\paragraph{Force on interior beads}
The force on bead $j \neq 1, P$ of particle $\ell$ is the same as in the distinguishable particle case with PBC (\Cref{eq:distinguiahble-bead-force-main-text}). The reason is that the force on an interior bead $\posbead{\ell}{j}$ in each configuration is independent of the permutation $\sigma$ and only depends on the winding vectors $\winding{\ell}{j},\winding{\ell}{j-1}$. Thus the expression reduces to a sum over winding vectors, which can be treated the same way as in distinguishable particles. These forces are evaluated in $\bigO(\windingcap PN)$ for all the $P-2$ interior beads of each of the $N$ particles.

\paragraph{Force on exterior beads}
The force on beads $1,P$ of each particle depends both on the permutation and on winding (see~\Cref{fig:bosonic-wind-configuration}). 
Our approach is similar to the distinguishable case (\Cref{eq:distinguiahble-bead-force-main-text}), grouping contributions that exert the same force on a given bead. 
Accounting for both permutations and windings leads to the following expression for bead $P$,
\begin{equation}
    \label{eq:force-on-P-sum-neighbors}
    \begin{alignedat}{1}
    &\beadforce{\ell}{P}{\Vall} = 
    \\
    & %
      \sum_{\winding{\ell}{P-1}}
        {
            \Prwinding{\ell}{P-1} \cdot
            -\springconstant (\posbead{\ell}{P} - \posbead{\ell}{P-1} - \winding{\ell}{P-1}L)}
        \\
    +
    &\sum_{\ellprime=1}^{N}{
        \sum_{\winding{\ell}{P}}
        {
        \Prrepandwinding{\ell}{\ellprime}{P} \cdot
        -\springconstant (\posbead{\ell}{P} + \winding{\ell}{P}L - \posbead{\ellprime}{1})}
        }
        .
    \end{alignedat}
\end{equation}
\Cref{eq:force-on-P-sum-neighbors} contains contributions from two springs: the spring connecting bead $P$ to the previous bead, $\beadpos{\ell}{P-1}$, and the spring connecting it to the next bead, $\beadpos{\rep{\sigma}(\ell)}{1}$, which depends on the permutation.
The first term averages over all the winding vectors for the former, and the second over all the winding vectors of the latter, as well as permutations.
An analogous expression for the force on the first bead appears in~\refappendix{sec:force-1-full}{I.A}.

The probability $\Prwinding{\ell}{P-1}$ for the spring that is not affected by exchange is the same as in~\Cref{eq:pr-winding}. The other spring, however, is affected by exchange, and so~\Cref{eq:pr-winding} does not apply. 
Instead, \Cref{eq:force-on-P-sum-neighbors} uses the \emph{joint probability} $\Prrepandwinding{\ell}{\ellprime}{P}$. It is defined as the sum of the probabilities of all the configurations where the permutation satisfies $\rep{\sigma}(\ell) = \ellprime$ \emph{and} the winding vector of bead $P$ of particle $\ell$ is equal to $\winding{\ell}{P}$. 
This joint probability can be written as a product, 
\begin{equation}
\label{eq:windingpermute-joint-prob}
\begin{alignedat}{1}
    &\Prrepandwinding{\ell}{\ellprime}{P} = 
    \\
    &\quad \Prrepnext{\ell}{\ellprime} \cdot 
    \Prwindingcond{\ell}{\ellprime}{P},
\end{alignedat}
\end{equation}
where $\Prrepnext{\ell}{\ellprime}$ is the probability that bead $P$ of particle $\ell$ is connected to the first bead of particle $\ellprime$, and $\Prwindingcond{\ell}{\ellprime}{P}$ is the conditional probability of having $\winding{\ell}{P}$ \emph{given that} $\rep{\sigma}(\ell) = \ellprime$. 
The benefit of this decomposition is that both these probabilities can be evaluated efficiently, as we now show.

\paragraph{Connection probability}
The marginal probabilities $\Prrepnext{\ell}{\ellprime}$ retain the same form as in the previous algorithm~\cite{quadratic-pidmb}, %
the only difference being that the term $\Efromto{\particledown}{\particleup}$ includes the sum over windings per~\Cref{eq:Ek}. For instance, we show in~\refappendixthm{lem:close-cycle-probability}{VI} that for $\ell' \leq \ell$,
\begin{equation}
\label{eq:connection-prob-close-cycle-main}
    \Prrepnext{\ell}{\ell'} = 
        \frac{1}{\ell} 
        \frac{\boltzmann{\left(\Vto{\ell'-1} + \Efromto{\ell'}{\ell} + \Vfrom{\ell+1}\right)}}{\boltzmann{\Vall}} ,
\end{equation}
where the partial bosonic potentials are defined through the recurrence relation
\begin{equation}
\label{eq:potentials-forward-main}
    \boltzmann{\Vfrom{u}} = \sum_{\ell=u}^{N}{\frac{1}{\ell} \boltzmann{\left(\Efromto{u}{\ell} + \Vfrom{\ell+1}\right)}},
\end{equation}
with $u=1,\dots,N$. Expressions for all the connection probabilities appear in~\refappendix{sec:connection-prob-full}{I.B}.

The derivation of~\Cref{eq:connection-prob-close-cycle-main,eq:potentials-forward-main} relies of the fact that a permutation can be decomposed into disjoint cycles. With periodic boundary conditions, the sum over permutations and windings decomposes too into a sum over disjoint cycles \emph{and} windings. This is the reason that the connection probabilities have the same expression as in the previous algorithm. In fact, this is what allows us to define the potential with PBC itself in a recursive manner (\Cref{eq:our-forward-potential-recurrence}).

\paragraph{Conditional winding probability}
The conditional probability is defined as a sum over all the configurations in which $\rep{\sigma}(\ell) = \ellprime$ and have the winding $\winding{\ell}{P}$.
We show in~\refappendixthm{lemma:winding-probability-given-connection-same-spring}{VI} that it can be calculated efficiently by
\begin{equation}
\label{eq:windingpermute-conditional-prob}
    \Prwindingcond{\ell}{\ellprime}{P} =
    \Prwindingext{\ell}{P}{\ellprime}{1}.
\end{equation}
This expression is identical to~\Cref{eq:pr-winding}, %
except that here $\beadpos{\ell}{P+1}=\beadpos{\ellprime}{1}$.
As~\Cref{eq:windingpermute-conditional-prob} shows, once the connectivity is set, all permutations have the same probability of winding $\winding{\ell}{P}$, which has the same form as in the case of distinguishable particles. %

\paragraph{Complexity of force evaluation}
Overall, the force on all the exterior beads is computed in $\bigO(\windingcap N^2)$ time: first, the connection probabilities $\Prrepnext{\ell}{\ellprime}$ are computed in $\bigO(N^2)$ according to~\Cref{eq:connection-prob-close-cycle-main} and the rest of the expressions appearing in~\refappendix{sec:connection-prob-full}{I.B}. Second, the $\windingcap N^2$ joint probabilities are computed in $\bigO(1)$ each according to~\Cref{eq:windingpermute-conditional-prob}. Finally, the force on each of the $2N$ exterior beads is computed in $\bigO(\windingcap N)$ according to~\Cref{eq:force-on-P-sum-neighbors}. The same considerations apply when evaluating the force on the first bead of each particle. As before, in the multidimensional case, the sum over windings in~\Cref{eq:force-on-P-sum-neighbors} splits into separate sums for different axes, hence it is $\bigO(\windingcap)$ and not $\bigO(\windingcap^d)$ (see~\refappendix{sec:multidimensional-winding-sum}{I.C}).

\subsubsection{Estimator for the kinetic energy in \texorpdfstring{$\bigO(\windingcap(N^2 + PN))$}{O(W(PN+NN))} time}
We derive a thermodynamic kinetic energy estimator appropriate for PBC,
which is given by
\begin{equation}
\label{eq:thermodynamics-estimator-main}
    \ev{E} = \frac{dPN}{2\beta} + \ev{\Vall +\beta\frac{\partial \Vall}{\partial\beta}} +  \ev{\bar{U}},
\end{equation}
where the brackets denote an average over the ensemble with PBC. This is the same estimator as in the previous algorithm, except that the recurrence relation for $\Vall +\beta\frac{\partial \Vall}{\partial\beta}$ must include the cycle energies with the sum over windings, $\Efromto{\particledown}{\particleup}$, and their derivative with respect to the inverse temperature $\beta$.
We provide full details on how this expression is evaluated efficiently in~\refappendix{sec:kinetic-estimator-full}{II}.

\subsubsection{Additional implementation details}
Pseudocode for our algorithm appears in~\refappendix{sec:alg-pseudocode}{III}. %
In our implementation, we also take care of the following:
\paragraph{Wrapping coordinates}
The limits of integration in~\Cref{eq:z-bosonic-pbc-explicit} restrict the coordinates of the particles to the unit cell. Therefore, in the course of the simulation, we prevent the particles from leaving the box by adding or subtracting integer multiples of the box length, whenever the particle leaves, as is customary in MD simulations with PBC~\cite{allen2017computer}.

\paragraph{Numerical stability} 
The log-sum-exp method~\cite{10.1093/imanum/draa038} is used~\cite{PhysRevE.106.025309} to ensure numerical stability whenever a sum of exponentials is used. In addition to the stages where such a sum is performed in the previous algorithm, our algorithm also includes such a sum as part of computing the cycle energies (described in~\Cref{eq:e-recurrence-step}).

\subsection{Numerical results}
\label{sec:empirical}
We applied our algorithm in PIMD simulations of two periodic systems: the free Bose gas, and particles in a sinusoidal trap. Below, we benchmark 
the new algorithm (labeled as ``\textsc{pimd-b (pbc)}'') and compare it to analytical results (labeled as ``\textsc{exact (pbc)}''), and simulations which neglect PBC using the previous algorithm (labeled as ``\textsc{pimd-b (no pbc)}''). Then, we validate the theoretical scaling of the algorithm, quadratic with $N$ and linear with $\windingcap$.

\paragraph{Benchmark}
In the free Bose gas, \Cref{fig:free-particles-energy-temperature} presents the resulting energy from our bosonic PIMD with PBC algorithm, at a range of temperatures, for $N=64$ and a density of $\density{0.035}$. A maximum of $P=32$ beads and a winding cutoff of $\windingcap = 1$ were required for convergence. Additional details appear in~\refappendix{sec:numerical-additional}{IV}. For comparison, the results of the previous algorithm are also presented.
The analytical solution for the free Bose gas with PBC is derived in~\refappendix{sec:analytical-free-particles}{V}.
We find very good agreement between our results and the analytical solution, with a relative error between 0.5\% and 3.3\%, in a temperature range where neglecting PBC leads to large errors.
\begin{figure}[ht]
    \centering
    \includegraphics[width=\linewidth]{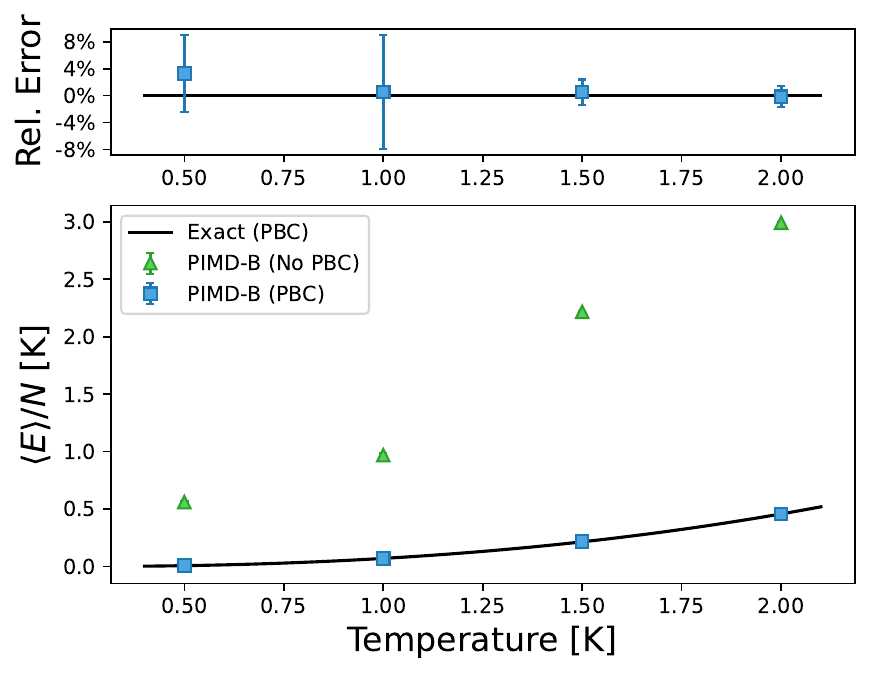}
    \caption{Energy per particle as a function of the temperature for the free Bose gas with $N=64$ atoms and density $n = \density{0.035}$. The data points correspond to bosonic PIMD simulations with PBC (blue) and without PBC (green). The solid line is the analytical result with PBC. The bottom panel shows absolute energy values, while the top panel shows the relative error of the PIMD-B PBC results when compared to the exact results.} 
    \label{fig:free-particles-energy-temperature}
\end{figure}

\begin{figure}[ht]
    \centering
    \includegraphics[width=\linewidth]{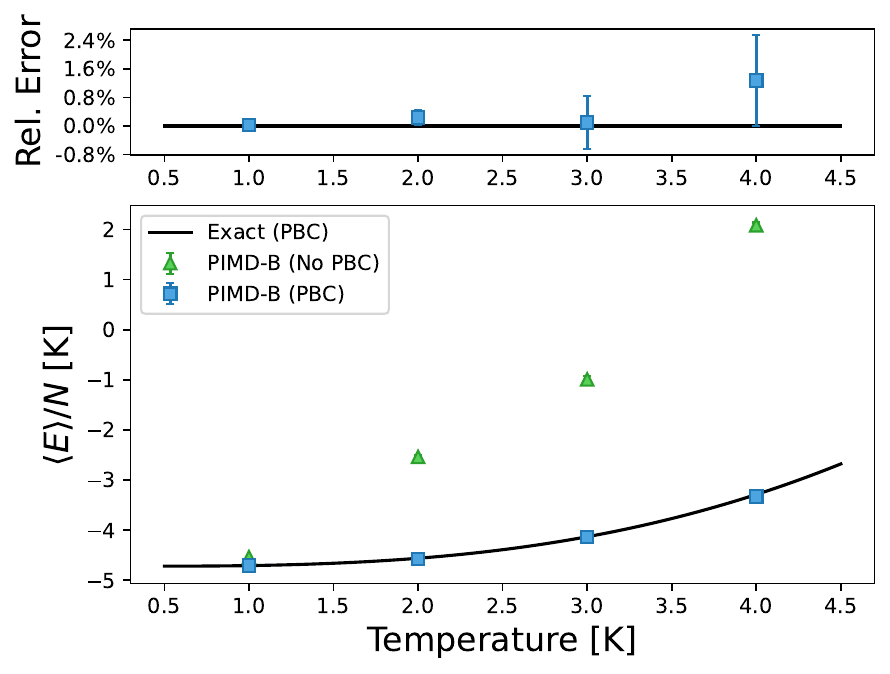}
    \caption{Energy per particle as a function of the temperature of $N=32$ particles in a sinusoidal trap. The data points correspond to bosonic PIMD simulations with PBC (blue) and without PBC (green). The solid line is the analytical result for the periodic system. The bottom panel shows absolute values, while the top panel shows the relative error of the PBC results when compared to the exact result.}
    \label{fig:cosine-trap-energy-temperature}
\end{figure}

For the periodic sinusoidal trap, \Cref{fig:cosine-trap-energy-temperature} presents the energy of a system of $N=32$ particles at the same density. Here, too, our algorithm reproduces the analytical result excellently, with a relative error ranging from $0.03 \%$ to $1.3 \%$. See~\refappendix{sec:analytical-cosine-trap}{V} for the derivation of the analytical result. For comparison, PIMD simulations with the previous algorithm are also presented, and lead to large errors.
These results demonstrate the correctness of our method of including PBC in bosonic PIMD.

\paragraph{Scaling}
\Cref{fig:scaling-size} shows the time required for a PIMD step with our bosonic algorithm as a function of $N$ in a simulation of free particles and $\windingcap=1$ (blue). The results are consistent with quadratic scaling with $N$, with a slope of $1.85$ in a log-log scale and a Pearson correlation coefficient $R^2=0.9998$.
Including PBC with $\windingcap = 1$ is roughly $\times 10$ slower in comparison to the previous algorithm (green).

\Cref{fig:scaling-wind} shows the time required for a PIMD step with our algorithm in a simulation of $N=64$ free particles and a varying winding cutoff $\windingcap$. The results are consistent with linear scaling with $\windingcap$ in a log-log scale and a Pearson correlation coefficient $R^2 = 0.9994$.
In~\Cref{fig:scaling-size,fig:scaling-wind}, the time measured in each point is averaged over $1000$ steps, the temperature is $T=\temperature{3.0}$, and $P=32$ and $P=4$ for for \Cref{fig:scaling-size} and \Cref{fig:scaling-wind}, respectively.
\begin{figure}[ht]
    \centering
    \includegraphics[width=\linewidth]{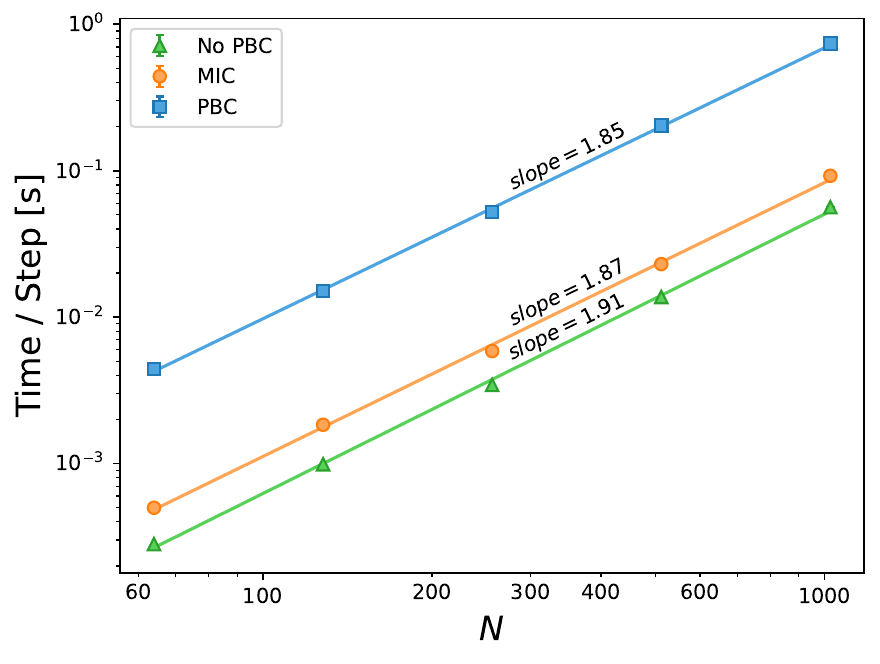}
    \caption{Scaling with the number of particles $N$ of bosonic PIMD with PBC and $\windingcap=1$ (blue). %
    For comparison, we also show the scaling of %
    bosonic PIMD with the minimum-image convention, as explained in~\Cref{sec:mic-analysis} (orange). Both algorithms are compared to the original non-periodic bosonic algorithm (green).
    In all cases,the fitted slope in log-log scale is close to $2$, as expected for quadratic scaling, but the prefactor varies.
    }
    \label{fig:scaling-size}
\end{figure}

\begin{figure}[ht]
    \centering
    \includegraphics[width=\linewidth]{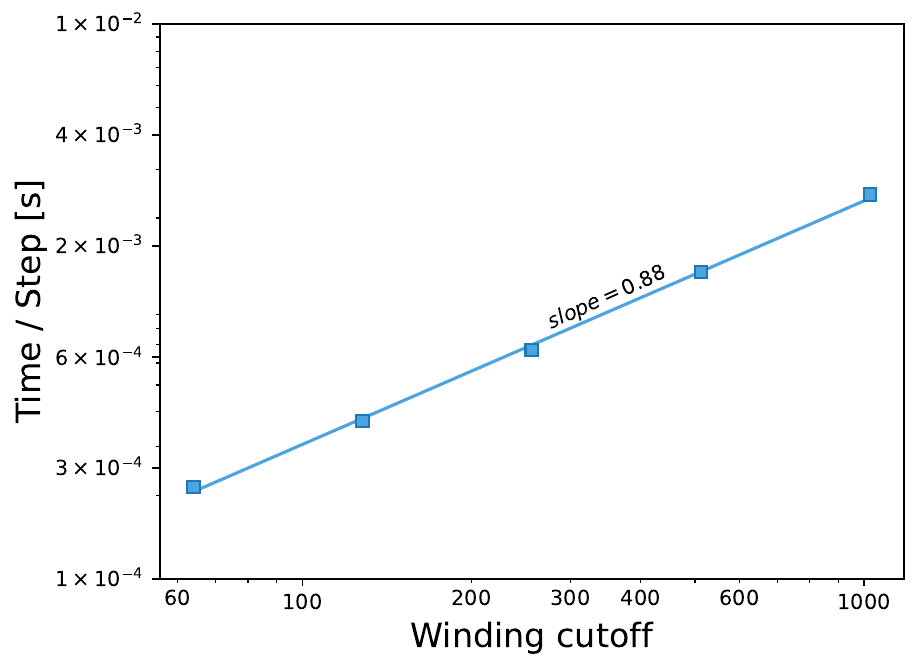}
    \caption{Scaling of bosonic PIMD with PBC as a function of the winding cutoff $\windingcap$. In log-log scale, the fitted slope is close to $1$, as expected for linear scaling.}
    \label{fig:scaling-wind}
\end{figure}

\section{PBC vs.\ the minimum image convention}
\label{sec:mic-analysis}
In this section, we compare the new algorithm for bosonic PIMD with PBC to applying the minimum-image convention (MIC) to the springs as an approximate way of including PBC in the simulation. 
Specifically, we use 
the previous algorithm without PBC, but 1) wrap coordinates inside the unit cell, and 
2) replace the differences $\beadpos{\ell}{j} - \beadpos{\ell}{j+1}$ in the spring energies and forces with the displacement to the nearest periodic image, by choosing the winding vector $\windingmic{\ell}{j}$ that minimizes it, %
\begin{equation}
\label{eq:mic-def}
    \windingmic{\ell}{j} = 
    \mathop{\text{arg min}}_{\winding{\ell}{j} \in \mathbb{Z}^d}
    \set{
        \beadpos{\ell}{j} - \beadpos{\ell}{j+1} + \winding{\ell}{j}L
    }. 
\end{equation}
In the multidimensional case each scalar component of~\Cref{eq:mic-def} is minimized separately.
In essence, the MIC selects a single winding vector per bead for which the energy is minimal, and the probability is the largest according to \Cref{eq:windingpermute-conditional-prob,eq:pr-winding}.
Although very reasonable, this approach is not rigorous, since the MIC algorithm does not take into account the exponentially-many other choices of winding vectors. 
We next investigate when this strategy works or fails in practice.

We find that in the periodic systems we consider, the MIC approximation converges to the correct result when $P$ is sufficiently large. However, we find that this $P$ can be unnecessarily large compared to the convergence of the rigorous bosonic PIMD with PBC algorithm in some cases.
We demonstrate that the difference between the algorithm with PBC and the MIC approximation disappears for $P$ that is large enough so that the distribution of winding numbers is narrowly centered around a single value. This explains why the MIC approximations converges to the correct result in practice, and why it fails to do so in values of $P$ lower than required for convergence of the PBC algorithm.

\paragraph{MIC approximation validity}
\Cref{fig:winding-mic-convergence-free} shows the convergence with $P$ of both the algorithm with PBC and the MIC approximation for the free Bose gas with $N=64$, $T=\temperature{0.5}$, $n=\density{0.035}$. If not shown, the statistical error is smaller than the symbol size. We find that MIC converges to the correct result with PBC at about $P=14$ while the rigorous treatment of PBC converges at $P=4$.
As another example, we compare the results of rigorous PBC treatment with the MIC for the sinusoidal trap. \Cref{fig:winding-mic-convergence-cosine-trap} shows the convergence with $P$ for bosons in a sinusoidal trap with $N=32$, $T=\temperature{1.0}$, $n=\density{0.035}$. There is a small difference between the PBC algorithm and the MIC approximation in this case, and both converge at about $P=26$. 

\paragraph{Criterion for using the MIC approximation} 
\begin{figure}[ht]
    \centering
    \includegraphics[width=\linewidth]{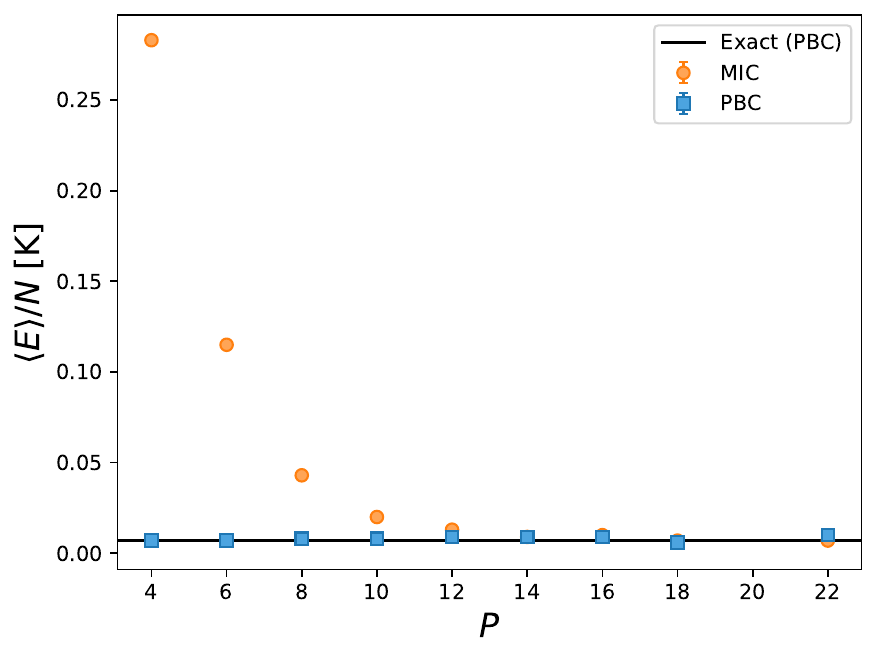}
    \caption{Energy per particle as a function of the number of beads $P$ for $N=64$ free bosons at $T=\temperature{0.5}$ and a density of $n=\density{0.035}$. The PBC algorithm converges at $P=4$ while the MIC approximation requires about $P=14$ beads.}
    \label{fig:winding-mic-convergence-free}
\end{figure}

\begin{figure}[ht]
    \centering
    \includegraphics[width=\linewidth]{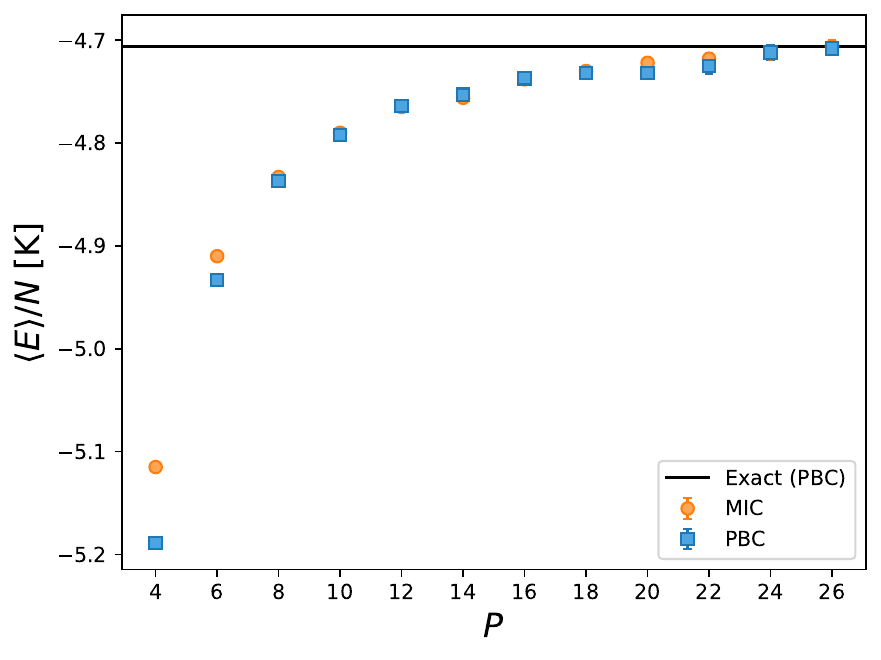}
    \caption{Energy per particle as a function of the number of beads $P$ for $N=32$ bosons in a sinusoidal trap at $T=\temperature{1.0}$ and density $n = \density{0.035}$. The difference between the PBC algorithm and MIC is negligible and both converge at the same $P$.
    }
    \label{fig:winding-mic-convergence-cosine-trap}
\end{figure}

To understand when MIC is a good approximation
, we consider for each spring the quantity
\begin{equation}
1 - \Prwindingmic{\ell}{j},
\label{eq:prob-discarded-mic}
\end{equation} 
where $\windingmic{\ell}{j}$ is the winding vector chosen by MIC accroding to~\Cref{eq:mic-def} for bead $j$ of particle $\ell$. \Cref{eq:prob-discarded-mic} is the total probability of all the winding vectors neglected by the MIC.
Notice that for $j = P$, $\Prwindingmic{\ell}{P}$ is a weighted sum over the possible connections of this bead. We now show that this quantity is indicative of whether MIC is a good approximation to the rigorous PBC treatment.

\Cref{fig:winding-mic-distribution-comparison} shows the discarded probability for bead $j=P$ of particle $\ell=1$ in a simulation of the free Bose gas at the same conditions as~\Cref{fig:winding-mic-convergence-free}. 
When $P=4$, the discarded probability is high during the simulation, i.e., there are multiple winding vectors with comparable, non-negligible probabilities, and the choice of a single winding vector by MIC neglects a significant share of important winding configurations. %
As $P$ increases, the discarded probability is lower throughout the simulation, and for $P=14$ it is nearly negligible throughout the entire simulation.
These results match \Cref{fig:winding-mic-convergence-free}, where the MIC converges at $P=14$, and differs significantly from the PBC results, which were already converged at $P=4$.

\Cref{fig:winding-mic-distribution-comparison-cosine} shows the discarded probability for the same bead but for the sinusoidal trap at the conditions of~\Cref{fig:winding-mic-convergence-cosine-trap}. The differences between MIC and PBC are relatively small, which is also evident from the discarded configurations. In $P=4$, the discarded probability is non-negligible but small, and indeed there are small differences for $P=4$ in~\Cref{fig:winding-mic-convergence-free}. When the discarded probability becomes negligible, as $P$ increases to $8$ and $14$, the small differences in energy disappear, in agreement with~\Cref{fig:winding-mic-convergence-free}.

The reason that the discarded probability decreases when $P$ increases can be understood from the probability distribution of winding numbers per~\Cref{eq:pr-winding,eq:windingpermute-joint-prob}. Larger $P$ makes the springs stiffer, leading to a higher energetic penalty for stretching them. This leads to a narrower winding distribution, and to MIC converging similarly to the rigorous treatment of PBC.
The same considerations apply when the box length increases, as a larger $L$ more strongly penalizes stretched springs. %

\begin{figure}[ht]
    \centering
    \includegraphics[width=\linewidth]{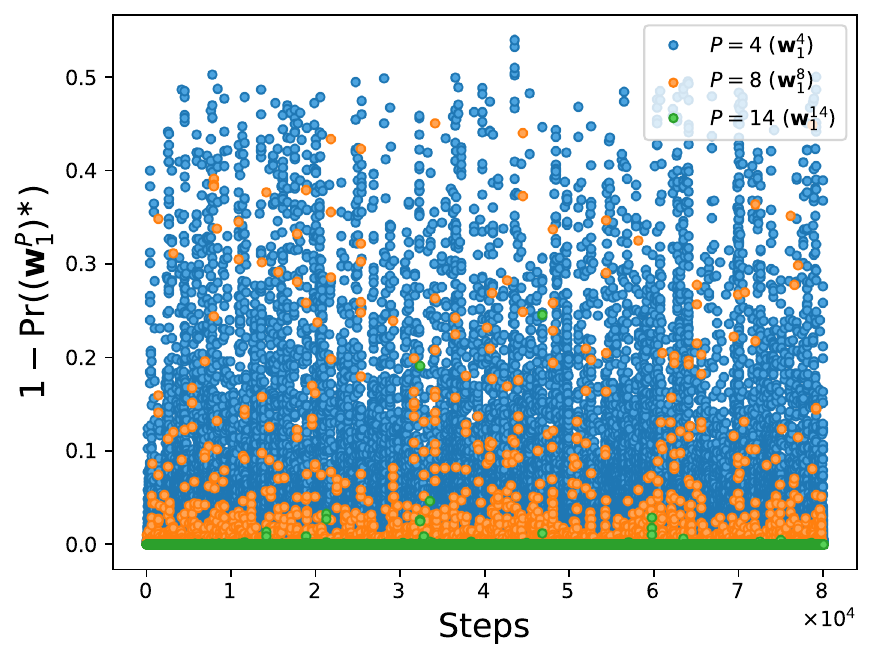}
    \caption{The probability of all the winding configurations for $\beadpos{1}{P}$ neglected by the MIC approximation during a simulation of $N=64$ free bosons at $T=\temperature{0.5}$, for different values of $P$. %
    As $P$ increases, the probability decays and the MIC approximation becomes better. At $P=14$ MIC and Winding yield practically identical trajectories, which is also when $\ev{E}/N$ in a MIC simulation converges to the correct result (see~\Cref{fig:winding-mic-convergence-free}).
    }
    \label{fig:winding-mic-distribution-comparison}
\end{figure}

\begin{figure}[ht]
    \centering
    \includegraphics[width=\linewidth]{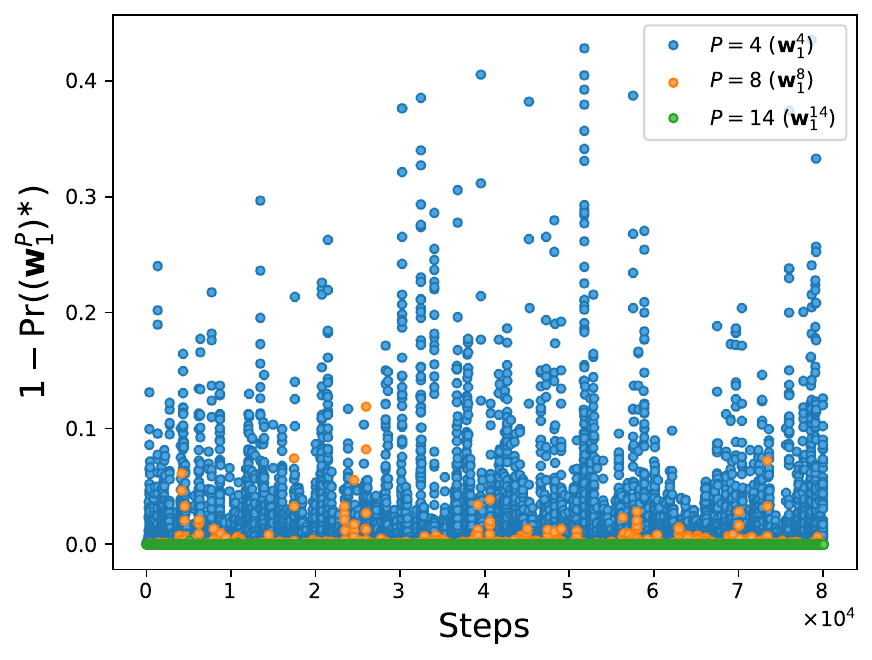}
    \caption{The probability of all the winding configurations neglected by the MIC approximation during a simulation of $N=32$ bosons in a sinusoidal trap at $T=\temperature{1.0}$, for different values of $P$. 
    As $P$ increases, the probability decays and the MIC approximation becomes better. At $P=14$ MIC and Winding yield practically identical trajectories, which is also when $\ev{E}/N$ in a MIC simulation converges to the correct result (see~\Cref{fig:winding-mic-convergence-cosine-trap}).}
    \label{fig:winding-mic-distribution-comparison-cosine}
\end{figure}

\section{Summary and conclusions}
In this paper, we developed an algorithm that rigorously accounts for PBC in bosonic PIMD simulations.
It required summing over the spring interaction between neighboring beads over all periodic images.
The key difficulty stemmed from the exponential number of periodic images (winding vectors) that had to be included, and that the winding vector of the last bead of every particle was coupled with the permutations, of which there is also an exponential number. We recently developed a quadratic scaling, recursive algorithm for handling the permutations but did not account for the sum over windings. Here, we showed that we can maintain the same recursive structure of the bosonic ring polymer potential and rigorously account for PBC by including the sum over windings in the cycle energies. 
We also evaluated the forces efficiently by deriving expressions for the connection and winding probabilities.
The resulting algorithm scales quadratically with the number of particles and linearly with the winding cutoff $\windingcap$.
We used the new algorithm to perform bosonic PIMD simulations with rigorous treatment of PBC for the first time, and benchmarked our approach on the free Bose gas and a model of cold atoms in a sinusoidal trap.
Finally, we carefully examined an approximate treatment of PBC using the MIC, derived a quantitative criterion to when is is a good approximation to the full periodic algorithm, and provided examples. We found that the MIC approximation becomes increasingly more suitable when the number of beads increases.
Our works enables simulations of bosonic systems with PIMD while accurately accounting for PBC.

\begin{acknowledgments}
B.H. acknowledges support by the USA-Israel Binational Science Foundation (grant No.\ 2020083) and the Israel Science Foundation (grants No.\ 1037/22 and 1312/22). Y.F. was supported by Schmidt Science Fellows, in partnership with the Rhodes Trust.
\end{acknowledgments}

\section*{Data Availability Statement}

The data that support the findings of this study, as well as details about the code, are openly available on GitHub, at \url{https://github.com/Hirshberg-Lab/bosonic-winding-pimd-data}.

\bibliography{refs}%

\iflong
\onecolumngrid
\clearpage
\appendix

\newcommand{\mainref}[2]{%
    \ifmaintextcompiled
        \Cref{#1}%
    \else
        {#2} in the main text%
    \fi
}

\newcommand{\mainrefplain}[2]{%
    \ifmaintextcompiled
        \Cref{#1}%
    \else
        {#2}%
    \fi
}

\newcommand{\maineqrefplain}[2]{\mainrefplain{#1}{Equation (#2)}}

\newcommand{\mainfigref}[2]{\mainref{#1}{Figure #2}}
\newcommand{\maineqref}[2]{\mainref{#1}{Equation (#2)}}
\newcommand{\mainsecref}[2]{\mainref{#1}{Section #2}}
\newcommand{\maineqrefs}[2]{\mainref{#1}{Equations #2}}

\newtheorem{theorem}{Theorem}
\newtheorem{lemma}{Lemma}
\newtheorem{definition}{Definition}
\newtheorem{corollary}{Corollary}[theorem]

\section{Additional details on the algorithm}
This section provides additional theoretical details on algorithm for bosonic PIMD with PBC and its quadratic scaling.

\subsection{Expression for the force on the first bead}
\label{sec:force-1-full}
In the main text we provide an expression for the force on the last bead of every particle
(\maineqrefplain{eq:force-on-P-sum-neighbors}{33}).
Here, we provide a similar expression for the force on the \emph{first} bead of every particle, as follows:
\begin{equation}
    \label{eq:force-on-1-sum-neighbors}
    \beadforce{\ell}{1}{\Vall} = 
    \sum_{\ellprime=1}^{N}{
        \sum_{\winding{\ellprime}{P}}{
        \Prrepandwinding{\ellprime}{\ell}{P} \cdot
        -\springconstant (\posbead{\ell}{1} - \posbead{\ellprime}{P} - \winding{\ellprime}{P}L)}
        }
    +
    \sum_{\winding{\ell}{1}}{
            \Prwinding{\ell}{1} \cdot
            -\springconstant (\posbead{\ell}{1} + \winding{\ell}{1} L - \posbead{\ell}{2} )}
    .
\end{equation}
In~\Cref{eq:force-on-1-sum-neighbors}, the first term corresponds to the contribution of the spring that connects the last bead of the previous particle to the first bead of particle $\ell$, which depends both on the permutation and the winding. The second term corresponds to the spring that connects the first bead of particle $\ell$ to the next bead in the same particle, and depends only on winding. Similarly to the force on the last bead, this expression allows to compute the force on the first bead of each particle in $\bigO(\windingcap N)$ time, and $\bigO(\windingcap N^2)$ overall for all such beads, based on the expressions for the winding and joint probabilities.

\subsection{Connection (marginal) probabilities}
\label{sec:connection-prob-full}
The probabilities $\Prrepnext{\ell}{\ellprime}$ are defined as the sum of the probabilities of all the configurations where the last bead of particle $\ell$ is connected to the first bead of particle $\ellprime$, with any choice of winding vectors. They can be computed according to the following expressions:
\begin{align}
\label{eq:connection-prob-start}
    \Prrepnext{\ell}{\ell+1} &= 
        1 - \frac{1}{\boltzmann{\Vall}} \boltzmann{\left(\Vto{\ell} + \Vfrom{\ell + 1}\right)}, &&
    \\
    \Prrepnext{\ell}{\ellprime} &= 
        \frac{1}{\ell} \frac{1}{\boltzmann{\Vall}} {\boltzmann{\left(\Vto{\ellprime-1} + \Efromto{\ellprime}{\ell} + \Vfrom{\ell+1}\right)}} && \mbox{for $\ellprime \leq \ell$},
    \\
    \Prrepnext{\ell}{\ellprime} &= 0 && \mbox{otherwise}.
\label{eq:connection-prob-end}
\end{align}
In these expressions, the potentials $\Vfrom{1},\ldots,\Vfrom{N}$ are defined by the recurrence relation
\begin{equation}
\label{eq:potentials-forward-si}
    \boltzmann{\Vfrom{u}} = \sum_{\ell=u}^{N}{\frac{1}{\ell} \boltzmann{\left(\Efromto{u}{\ell} + \Vfrom{\ell+1}\right)}}.
\end{equation}
Overall, the expressions for the connection probabilities are exactly as in the previous algorithm~\cite{quadratic-pidmb}, except that here the expression $\Efromto{\particledown}{\particleup}$ refers to the definition in the main text
(\maineqrefplain{eq:Ek}{25})
which sums over winding numbers.
The proof of these expressions appears in~\Cref{lem:direct-link-probability,lem:close-cycle-probability} in~\Cref{sec:proofs-connection-probabilities}.

\subsection{Sum over windings in more than one dimension}
\label{sec:multidimensional-winding-sum}
In this section, we explain how in our algorithm sums over winding vectors can be performed in time that is linear with the dimension.
In principle, for a maximal winding of $\windingcap$, the number of possible winding vectors $\winding{\ell}{j}$ is $(2\windingcap+1)^d$ --- a choice of winding number between $-\windingcap,\ldots,\windingcap$ for each of the $d$ coordinates of the bead. 
However, in our algorithm, all these sums can be computed in $\bigO(\windingcap)$ time instead, as follows.

\paragraph{Potential}
In the evaluation of the potential, we compute the total weight of all the windings of a given spring
(see~\maineqref{eq:edgewinding}{16}).
Explicitly writing the coordinates of the position and winding vectors, this can be simplified as follows:
\begin{align}
    \nonumber
    \edgewinding{u}{j}{v}{k} &= \sum_{\winding{u}{j}}{\boltzmann{\springenergyprefix \rdiffsquaredwinding{u}{j}{v}{j+1}}}
    \\
    \nonumber
    & = 
    \sum_{\winding{u}{j}}{
        \boltzmann{\sum_{i=1}^{d}{\springenergyprefix \left(\beadposdim{u}{j}{i} + \windingdim{u}{j}{i} - \beadposdim{v}{j+1}{i}\right)^2}}
    }
    \\
    \nonumber
    & = 
    \sum_{\winding{u}{j}}{
        \prod_{i=1}^{d}{\boltzmann{\springenergyprefix \left(\beadposdim{u}{j}{i} + \windingdim{u}{j}{i} - \beadposdim{v}{j+1}{i}\right)^2}}
    }
    \\
    & = 
    \prod_{i=1}^{d}{
        \sum_{\windingcomponent{u}{j}{i}}{\boltzmann{\springenergyprefix \left(\beadposdim{u}{j}{i} + \windingdim{u}{j}{i} - \beadposdim{v}{j+1}{i}\right)^2}}
    }.
\end{align}    
The sum per each coordinate is computed separately, in $\bigO(\windingcap)$ time, and then the product on the $d$ dimensions yields the desired value, in $\bigO(\windingcap)$ time ($d$ is considered a constant throughout).

\paragraph{Forces: Interior springs}
Consider, for example, the contribution of an interior spring between beads $j,j+1$ of particle $\ell$ to the %
force on bead $j$ of particle $\ell$. 
For the $i$th component, it takes the form
(see~\maineqref{eq:distinguiahble-bead-force-main-text}{21})

\begin{equation}
\label{eq:interior-spring-contribution-by-dim}
\sum_{\winding{\ell}{j}}{
        \Prwinding{\ell}{j} \cdot
        -\springconstant \left(\posbeaddim{\ell}{j}{i} + \windingdim{\ell}{j}{i}L - \posbeaddim{\ell}{j+1}{i}\right).
}
\end{equation}
We simplify the sum over winding vectors by noting that in an interior spring,
the probability of the total winding $\winding{\ell}{j}$, as shown in the main text
(\maineqrefplain{eq:pr-winding}{23}),
can be written as the product of the probabilities of its individual components: %
\begin{equation}
    \Prwinding{\ell}{j} = \Prwindingext{\ell}{j}{\ell}{j+1}
    = \prod_{i=1}^{d}{\left(
        \frac{\boltzmann{\springenergyprefix \rdiffsquaredwindingdim{\ell}{j}{\ell}{j+1}{i}}}
        {\left(\sum_{\windingdim{\ell}{j}{i}}{\boltzmann{\springenergyprefix \rdiffsquaredwindingdim{\ell}{j}{\ell}{j+1}{i}}}\right)}
    \right)}.
\end{equation}
Since the force expression depends only on the coordinate $\windingdim{\ell}{j}{i}$, it can be factored out, with the of the probabilities of the windings for the other coordinates summing to $1$. We thus obtain the following expression for~\Cref{eq:interior-spring-contribution-by-dim}:
\begin{equation}
    \sum_{\windingdim{\ell}{j}{i}}{
        \frac{\boltzmann{\springenergyprefix \rdiffsquaredwindingdim{\ell}{j}{\ell}{j+1}{i}}}
            {\left(\sum_{\windingdim{\ell}{j}{i}}{\boltzmann{\springenergyprefix \rdiffsquaredwindingdim{\ell}{j}{\ell}{j+1}{i}}}\right)}
        \cdot
        -\springconstant \left(\posbeaddim{\ell}{j}{i} + \windingdim{\ell}{j}{i} L - \posbeaddim{\ell}{j+1}{i}\right)
    }.
\end{equation}
In this expression, the sum is only over the windings of the specific coordinate and is performed in $\bigO(\windingcap)$ time.

\paragraph{Forces: Exterior springs}
The contribution of an exterior spring between bead $P$ of particle $\ell$ and bead $1$ of particle $\ellprime$ is, as shown in the main text
(see~\maineqrefplain{eq:force-on-P-sum-neighbors,eq:windingpermute-joint-prob}{33--34})
\begin{align}
    \nonumber
    &\qquad 
    \Prrepnext{\ell}{\ellprime} 
        \Prwindingext{\ell}{P}{\ellprime}{1}
        \cdot
        -\springconstant (\posbead{\ell}{P} + \winding{\ell}{P}L - \posbead{\ellprime}{1})
    .
\end{align}
The sums over windings in $\edgewinding{\ell}{P}{\ellprime}{1}$ can be evaluated in $\bigO(\windingcap)$ time exactly %
as in the case of interior springs.

\section{Thermodynamic kinetic energy estimator}
\label{sec:kinetic-estimator-full}
In this section, we derive the thermodynamic kinetic estimator for our bosonic PIMD algorithm with PBC.
As defined in the main text%
~(\maineqrefplain{eq:thermodynamics-estimator-main}{38}), the estimator is given by
\begin{equation}
\label{eq:thermodynamics-estimator-si}
    E = \frac{dPN}{2\beta} + \Vall +\beta\frac{\partial \Vall}{\partial\beta} +  \bar{U}.
\end{equation}

Since $\Vall$ is given by a recurrence relation, %
$\Vall = -\frac{1}{\beta} \ln{\frac{1}{N} \sum_{k=1}^{N}{\boltzmann{\left(\Vto{N-k} + \Enk{N}{k}\right)}}}$ (see~\maineqref{eq:our-forward-potential-recurrence}{24}),
the expression $\left(\Vall +\beta\frac{\partial \Vall}{\partial\beta}\right)$ is also given by a recurrence relation:
\begin{equation}
\label{eq:kinetic-est-recursion}
    \left(\Vto{\particleup} +\beta\frac{\partial \Vto{\particleup}}{\partial\beta}\right)
    =\frac{
        \sum\limits_{k=1}^{\particleup}
        \left[
        \left(\Vto{\particleup - k} + \beta\frac{\partial \Vto{\particleup - k}}{\partial\beta}\right) - \efromtodbeta{\particleup - k + 1}{\particleup} %
        \right]
        \boltzmann{\left(\Vto{\particleup - k}+\Efromto{\particleup-k+1}{\particleup}\right)}
    }
    {\sum\limits_{k=1}^{\particleup}{\boltzmann{\left(\Vto{\particleup - k} + \Efromto{\particleup - k + 1}{\particleup}\right)}}},
\end{equation}
with base case $\left(\Vto{0} +\beta\frac{\partial \Vto{0}}{\partial\beta}\right) = 0$. In~\Cref{eq:kinetic-est-recursion}, the term $\efromtodbeta{\particleup - k + 1}{\particleup}$ results from the derivative of 
a quantity
involving the cycle energy:
\begin{equation}
\label{eq:kinetic-e-deriv}
    \efromtodbeta{\particledown}{\particleup} = -\frac{\partial \left(\beta \Efromto{\particledown}{\particleup}\right)}{\partial \beta}.
\end{equation}

In the case of $\windingcap = 0$, this expression reduces to the thermodynamic estimator provided for the original algorithm~\cite{hirshberg2019path}.
 
To achieve an evaluation of the estimator with $\bigO(\windingcap (PN + N^2))$ scaling, we need to compute the terms $\efromtodbeta{\particledown}{\particleup}$ ($\particledown=1\ldots N, \particleup=\particledown\ldots N$) efficiently. Once these are known, evaluating the estimator through the recurrence is done in $\bigO(N^2)$ time (linear time for each $\left(\Vto{\particleup} +\beta\frac{\partial \Vto{\particleup}}{\partial\beta}\right)$, with $\particleup=1,\ldots,N$). %

To this end, we employ the following recurrence relation, %
\begin{equation}
\label{eq:edbeta-recurrence}
\begin{aligned}
    \efromtodbeta{\particledown}{\particleup}
    =
        \efromtodbeta{\particledown+1}{\particleup} 
        & -\sum_{\winding{\particleup}{P}}{
            \Prwindingext{\particleup}{P}{\particledown+1}{1}
            \cdot 
            \springenergyprefix \rdiffsquaredwinding{\particleup}{P}{\particledown+1}{1}
            }
            \\
        & + \sum_{\winding{\particledown}{P}}{
            \Prwindingext{\particledown}{P}{\particledown+1}{1}
            \cdot
            \springenergyprefix \rdiffsquaredwinding{\particledown}{P}{\particledown+1}{1}
        }
        \\
        & + \Einteriordbeta{\particledown}
        \\
        & + \sum_{\winding{\particleup}{P}}{
            \Prwindingext{\particleup}{P}{\particledown}{1}
            \cdot
            \springenergyprefix \rdiffsquaredwinding{\particleup}{P}{\particledown}{1}.
        }
\end{aligned}
\end{equation}
In~\Cref{eq:edbeta-recurrence}, $\Einteriordbeta{\particledown}$ is the contribution of the interior springs, %
\begin{equation}
\label{eq:edbeta-interior}
    \Einteriordbeta{\ell} = \sum_{j=1}^{P-1}{\sum_{\winding{\ell}{j}}{
        \Prwindingextshortdistinguishable{\ell}{j}
        \cdot \springenergyprefix \rdiffsquaredwinding{\ell}{j}{\ell}{j+1}
    }}.
\end{equation}
The base case of the recursion is the case $\efromtodbeta{\particleup}{\particleup}$, which is computed
\begin{equation}
\label{eq:edbeta-recurrence-base}
    \efromtodbeta{\particleup}{\particleup} = 
        \Einteriordbeta{\particledown}
        +
        \sum_{\winding{\particleup}{P}}
        {
            \Prwindingext{\particleup}{P}{\particleup}{1}
            \cdot\
            \springenergyprefix \rdiffsquaredwinding{\particleup}{P}{\particleup}{1}
        }.
\end{equation}

The recurrence for $\efromtodbeta{\particledown}{\particleup}$ in~\Cref{eq:edbeta-recurrence,eq:edbeta-recurrence-base} follows the same structure as the recurrence for $\Efromto{\particledown}{\particleup}$ (see~\mainsecref{sec:e-recursion-full}{III B 1}), extending cycles one particle at a time. Here, when we add or remove a spring, instead of adding or removing the sum of spring energies for all windings as is done for $\Efromto{\particledown}{\particleup}$, we add or remove the \emph{average} of the spring energy \emph{weighted} according to %
the winding distribution of that spring. %

With these expressions, the thermodynamic kinetic energy estimator is computed in $\bigO(\windingcap (N^2 + PN))$ time: the contribution of the interior springs according to~\Cref{eq:edbeta-interior} takes $\bigO(\windingcap PN)$ (for each of the $N$ particles, a sum over %
the windings for each of the $P-1$ beads); the base case according to~\Cref{eq:edbeta-recurrence-base} is $\bigO(\windingcap N)$; extending a cycle in each recurrence step in~\Cref{eq:edbeta-recurrence} is $\bigO(\windingcap)$, which is performed $\bigO(N^2)$ times. Once these steps are completed, the rest of the estimator calculation 
takes additional $\bigO(N^2)$, leading to $\bigO(N^2 + PN)$ scaling overall.

\section{Algorithm pseudocode}
\label{sec:alg-pseudocode}

The pseudocode of the bosonic PIMD algorithm with PBC is presented in~\Cref{alg:bosonic-pimdb-pbc}. %
The code differs from the main text in a number of ways:

\paragraph{Minor changes}
In~\cref{ln:force-exterior-beads} of~\Cref{alg:bosonic-pimdb-pbc}, the sum over neighbors of the exterior beads is restricted to certain neighbors, as the connection probability is zero in other cases (see~\Cref{sec:connection-prob-full}).

\paragraph{Separating interior and exterior beads}
{In~\Cref{alg:bosonic-pimdb-pbc}, the definition of the cycle energies does not include the contribution of the interior springs, which is different from the definition in \maineqref{eq:Ek}{25}. The reason is that the interior springs contribute equally to all permutations, resulting in a constant factor in the probabilities of~\maineqref{eq:representative-distribution}{32}. Our code takes advantage of this to reduce synchronization between processors.

With this definition of the cycle energies, $\Vall$ does not include the contribution of the interior springs, and therefore 
the expression for the thermodynamic energy estimator (\Cref{sec:kinetic-estimator-full}) must include the energy of the interior springs through an additional term, the second term in the expression,
\begin{equation}
\label{eq:kinetic-est-full-separate-interior}
    E
    = \frac{dNP}{2\beta}
    -\springenergyprefix \sum_{\ell=1}^{N}{\sum_{j=1}^{P-1}{\sum_{\winding{\ell}{j}}{
        \Prwinding{\ell}{j} \cdot
        \rdiffsquaredwinding{\ell}{j}{\ell}{j+1}
    }}}
    +
    \left(\Vall +\beta\frac{\partial \Vall}{\partial\beta}\right)
    + \bar{U}.
\end{equation}
In~\Cref{eq:kinetic-est-full-separate-interior}, $\Prwinding{\ell}{j}$ is the probability of the winding of an interior spring, defined in
\maineqref{eq:pr-winding}{23}, and the evaluation of $\left(\Vall +\beta\frac{\partial \Vall}{\partial\beta}\right)$ changes by modifying \Cref{eq:edbeta-recurrence} not to include the term $\Einteriordbeta{\particledown}$. %

{
\alghidebottomrule{}
\begin{algorithm}[H]
\DontPrintSemicolon
\caption{Quadratic scaling bosonic PIMD with PBC\label{alg:bosonic-pimdb-pbc}}
\tcp{Compute weights for interior beads}
\For{$\ell=1 \dots N$}
{
\For{$j=2 \dots P-1$}
{
\For{$\winding{\ell}{j}$}
{
$
\edgewindingspecificdistinguishable{\ell}{j} = \boltzmann{\springenergyprefix \rdiffsquaredwinding{\ell}{j}{\ell}{j+1}}
$
}
$\edgewinding{\ell}{j}{\ell}{j+1} = \sum\limits_{\winding{\ell}{j}}{\edgewindingspecific{\ell}{j}{\ell}{j+1}}$
}
}
\tcp{Compute force on interior beads}
\For{$\ell=1 \dots N$}
{
\For{$j=2 \dots P-1$}
{
$
\begin{aligned}
\beadforce{\ell}{j}{\Vall}
= 
        &\sum_{\winding{\ell}{j-1}}{
            \frac{\edgewindingspecific{\ell}{j-1}{\ell}{j}}{\edgewinding{\ell}{j-1}{\ell}{j}}
            \cdot
            -\springconstant (\posbead{\ell}{j} - \posbead{\ell}{j-1} - \winding{\ell}{j-1}L)}
        \\
        +
        &\sum_{\winding{\ell}{j}}{
            \Prwindingextshortdistinguishable{\ell}{j}
            \cdot
            -\springconstant (\posbead{\ell}{j} + \winding{\ell}{j}L - \posbead{\ell}{j+1})}
\end{aligned}
$
}
}
\rememberlines
\end{algorithm}
}

\newpage
\setcounter{algocf}{0}
{
\alghidetoprule{}
\setlength{\interspacetitleruled}{0pt}%
\setlength{\algotitleheightrule}{0pt}%
\begin{algorithm}[H]
\DontPrintSemicolon
\resumenumbering
\tcp{Compute weights for exterior beads}
\For{$\ell=1 \dots N$}
{
\For{$\ellprime=1 \dots \ell + 1$}
{
\For{$\winding{\ell}{P}$}
{
$
\edgewindingspecific{\ell}{P}{\ellprime}{1} = \boltzmann{\springenergyprefix \rdiffsquaredwinding{\ell}{P}{\ellprime}{1}}
$
}
$\edgewinding{\ell}{P}{\ellprime}{1} = \sum\limits_{\winding{\ell}{P}}{\edgewindingspecific{\ell}{P}{\ellprime}{1}}$
}
}
\tcp{Compute cycle energies $\Efromto{\particledown - \particleup + 1}{\particledown}$ (without interior springs) %
}
\For{$v=1\dots N$}
{
$
\boltzmann{\Efromto{\particleup}{\particleup}}=
    \edgewinding{\particleup}{P}{\particleup}{1}
$
\;
\For{$\particledown = \particleup - 1 \dots 1$}
{
$
\begin{aligned}
\boltzmann{\Efromto{\particledown}{\particleup}} =
\boltzmann{\Efromto{\particledown + 1}{\particleup}} & 
/
    \edgewinding{\particleup}{P}{\particledown + 1}{1}
\\
 & 
\cdot
    \edgewinding{\particledown}{P}{\particledown + 1}{1}
\\
 & 
\cdot
    \edgewinding{\particleup}{P}{\particledown}{1}
\end{aligned}
$
}
}
\tcp{Compute potentials $\Vto{\particleup}$ (without interior springs)}
$\Vto{0} = 0$\;
\For{$\particleup = 1 \dots N$}
{
$
\boltzmann{\Vto{\particleup}} = \frac{1}{\particleup} \sum_{k=1}^{\particleup} 
    \boltzmann{
        \left(
            \Vto{\particleup - k} + \Efromto{\particleup - k + 1}{\particleup}
        \right)
    }
$
}
\tcp{Compute potentials $\Vfrom{\particledown}$ (without interior springs)}
$\Vfrom{N + 1} = 0$\;
\For{$\particledown = N \dots 1$}
{
$
\boltzmann{\Vfrom{\particledown}} = 
\sum_{\ell = \particledown}^{N}
\frac{1}{\ell} \boltzmann{\left(
    \Efromto{\particledown}{\ell} + \Vfrom{\ell + 1}
\right)
}
$
}
\tcp{Compute connection probabilities}
\For{$\ell = 1 \dots N-1$}
{
$
\Prrepnext{\ell}{\ell+1} = 
    1 - \frac{1}{\boltzmann{\Vall}} \boltzmann{\left(\Vto{\ell} + \Vfrom{\ell + 1}\right)}
$
}
\For{$\ellprime = 1 \dots N$}
{
\For{$\ell = \ellprime \dots N$}
{
$
\Prrepnext{\ell}{\ellprime} = 
        \frac{1}{\ell} \frac{1}{\boltzmann{\Vall}} {\boltzmann{\left(\Vto{\ell'-1} + \Efromto{\ell'}{\ell} + \Vfrom{\ell+1}\right)}} 
$
}
}
\tcp{Compute force on exterior beads}
\For{$\ell = 1 \dots N$}
{
$\label{ln:force-exterior-beads}
\begin{aligned}
    \beadforce{\ell}{1}{\Vall} &=  
      \sum_{\winding{\ell}{1}}
        {
            \frac{\edgewindingspecific{\ell}{1}{\ell}{2}}{\edgewinding{\ell}{1}{\ell}{2}}
            \cdot
            -\springconstant (\posbead{\ell}{1} + \winding{\ell}{1}L - \posbead{\ell}{2})}
        \\
    +
    &
    \sum_{\ellprime=\ell - 1}^{N}{
        \Prrepnext{\ellprime}{\ell}
        \sum_{\winding{\ellprime}{P}}
        {
            \Prwindingext{\ellprime}{P}{\ell}{1}
            \cdot
            -\springconstant (\posbead{\ell}{1} - \posbead{\ellprime}{P} - \winding{\ellprime}{P}L)
        }
     }\\
    \beadforce{\ell}{P}{\Vall} &=  
      \sum_{\winding{\ell}{P-1}}
        {
            \frac{\edgewindingspecific{\ell}{P-1}{\ell}{P}}{\edgewinding{\ell}{P-1}{\ell}{P}}
            \cdot
            -\springconstant (\posbead{\ell}{P} - \posbead{\ell}{P-1} - \winding{\ell}{P-1}L)}
        \\
    +
    &
    \sum_{\ellprime=1}^{\ell + 1}{
        \Prrepnext{\ell}{\ellprime}
        \sum_{\winding{\ell}{P}}
        {
        \Prwindingext{\ell}{P}{\ellprime}{1}
        \cdot
        -\springconstant (\posbead{\ell}{P} + \winding{\ell}{P}L - \posbead{\ellprime}{1})}
     }
\end{aligned}
$
}
\end{algorithm}
}
\setcounter{algocf}{1}  
\section{Additional simulations details}
\label{sec:numerical-additional}

\subsection{General computational details}
All simulations are performed in a three-dimensional cubic geometry at a number density of $\density{0.035}$, with a mass of $\massunit{4.0}$. Except for the scalability analysis w.r.t. $N$, we simulate $N=64$ particles in the free Bose gas case (box side of $L = \length{12.23}$), while, for the sinusoidal trap, we use $N=32$ particles (box side of $L = \length{9.71}$). 
For the sinusoidal trap we use the potential $V\left(r_i\right) = V_0 \cos \left(\frac{2\pi}{L} r_i\right)$, where $r_i$ is the spatial coordinate, for all $i=x,y,z$, with $L$ equal to the box side length and the field amplitude $V_0 = \unit{0.3}{meV}$. Periodic boundary conditions are applied to all sides of the cubic box. Whenever we ran our PBC algorithm, we did so with a winding cutoff of $\windingcap = 1$ (except for the scalability analysis w.r.t. $\windingcap$).

Particles are initialized in a grid-like arrangement within the unit cell, %
with beads of the same particle initialized at the same position.
The initial velocities are sampled from a Maxwell--Boltzmann distribution, without zeroing the center of mass motion. To correctly sample the canonical ensemble, we employ a simple %
Langevin thermostat attached to the Cartesian coordinates of each bead, with time steps as outlined in the tables below, and a friction coefficient of $\left(100\Delta t\right)^{-1}$.
Observables such as energy are recorded every $100$ MD steps. When averaging over instantaneous values of the observables, we discard the first $20\%$ of the recorded values to reduce the effect of the equilibration steps on the final results.

\subsection{Benchmark: Energy as a function of temperature}
For~\mainfigref{fig:free-particles-energy-temperature}{4} (the free Bose gas), we ran simulations at temperatures $T=\temperature{0.5}, \temperature{1.0}, \temperature{1.5}, \temperature{2.0}$. For each temperature, we verified convergence w.r.t. $P$.

\begin{table}[H]
\centering
\begin{tabular}{ccccc}
\toprule 
Temperature $\left[\text{K}\right]$ & Time step $\left[\text{fs}\right]$ & Trajectories & Steps $\left[10^{7}\right]$ & Converged $P$\tabularnewline
\midrule
\midrule 
$0.5$ & $4.0$ & $200$ & $1.0$ & $10$\tabularnewline
\midrule 
$1.0$ & $1.0$ & $20$ & $1.0$ & $4$\tabularnewline
\midrule 
$1.5$ & $2.0$ & $20$ & $1.0$ & $4$\tabularnewline
\midrule 
$2.0$ & $2.0$ & $20$ & $1.0$ & $4$\tabularnewline
\bottomrule
\end{tabular}
\caption{Simulation parameters used for the free Bose gas (\mainfigref{fig:free-particles-energy-temperature}{4}).}
\end{table}

For~\mainfigref{fig:cosine-trap-energy-temperature}{5} (sinusoidal trap), we ran simulations at temperatures $T=\temperature{1.0}, \temperature{2.0}, \temperature{3.0}, \temperature{4.0}$. For each temperature, we verified convergence w.r.t. $P$.

\begin{table}[H]
\centering
\begin{tabular}{ccccc}
\toprule 
Temperature $\left[\text{K}\right]$ & Time step $\left[\text{fs}\right]$ & Trajectories & Steps $\left[10^{7}\right]$ & Converged $P$\tabularnewline
\midrule
\midrule 
$1.0$ & $2.0$ & $20$ & $1.0$ & $26$\tabularnewline
\midrule 
$2.0$ & $0.5$ & $20$ & $10.0$ & $14$\tabularnewline
\midrule 
$3.0$ & $0.3$ & $20$ & $1.0$ & $8$\tabularnewline
\midrule 
$4.0$ & $0.5$ & $20$ & $1.0$ & $6$\tabularnewline
\bottomrule
\end{tabular}
\caption{Simulation parameters used for the sinusoidal trap (\mainfigref{fig:cosine-trap-energy-temperature}{5}).}
\end{table}

\subsection{Scaling analysis}
All simulations pertaining to the scalability analysis were performed on a cluster of servers, each with two Intel Xeon Platinum 9242 CPU @ 2.30GHz, 386GB RAM, and a total of 96 cores. %
A single server was used for each measurement.
$P$ cores were used for each simulation (based on the replica parallelization mechanism, implemented with OpenMPI). %

For~\mainfigref{fig:scaling-size}{6} ($N$ scalability) we ran PIMD simulations of $N=64, 128, 256, 512, 1024$ free bosons with $P=32$, at temperature $T=\temperature{3.0}$ and an MD time step of $\Delta t = \dt{0.5}$. We ran a total of $10^3$ steps.

For~\mainfigref{fig:scaling-wind}{7} (winding cutoff scalability) we ran PIMD simulations of $N=2$ free bosons with $P=4$, at temperature $T=\temperature{3.0}$ and an MD time step of $\Delta t = \dt{0.5}$. We ran a total of $10^3$ steps, and we did so for $\windingcap = 64, 128, 256, 512, 1024$.
\subsection{Winding vs.\ PBC: Energy as a function of the number of beads}
For~\mainfigref{fig:winding-mic-convergence-free}{8} (the free Bose gas $\ev{E}/N$ convergence w.r.t. $P$), we ran simulations at temperature $T=\temperature{0.5}$, with a time step of $\Delta t = \dt{4.0}$ and $10^7$ steps overall. The result was averaged over $100$ independent trajectories.

For~\mainfigref{fig:winding-mic-convergence-cosine-trap}{9} (sinusoidal trap $\ev{E}/N$ convergence w.r.t. $P$), we ran simulations at temperature $T=\temperature{1.0}$, with a time step of $\Delta t = \dt{2.0}$ and $10^7$ steps overall. The result was averaged over $20$ independent trajectories.

\subsection{Winding vs.\ PBC: Discarded winding probability as a function of the simulation step}

For~\mainfigref{fig:winding-mic-distribution-comparison}{10} (discarded winding probability in the free Bose gas), we ran simulations of $N=64$ bosons at a temperature $T=\temperature{0.5}$, with the time step $\Delta t = \dt{4.0}$ and for $10^7$ steps. We did so for $P=4, 8, 14$ and extracted the discarded winding probability for the last bead of particle $1$.

Similarly, for~\mainfigref{fig:winding-mic-distribution-comparison-cosine}{11} (discarded winding probability in a sinusoidal trap), we ran simulations of $N=32$ bosons at a temperature $T=\temperature{1.0}$, with the time step $\Delta t = \dt{2.0}$. The rest of the parameters are exactly the same as in the free Bose gas case.

\section{Analytical results for the free Bose gas and the non-interacting sinusoidal trap}
In this section, we derive the analytical results for the energy of the free Bose gas and bosons in a sinusoidal trap without interaction to which we compare our numerical results%
~(\mainsecref{sec:empirical}{III C}).

\subsection{The free Bose gas}
\label{sec:analytical-free-particles}
The energy eigenvalues of a free particle in periodic boundary conditions in one dimension are~\cite{Kardar_2007}
\begin{equation*}
    E_n=\frac{2\pi^{2}\hbar^{2}}{mL^{2}}n^{2}, \quad n = 0, \pm 1, \pm 2, \dots \; ,
\end{equation*}
%
and the partition function is
\begin{equation}
\label{eq:free-z}
    Z_1(\beta) = \Tr(\boltzmann{\hat{H}})
        = \sum_{n} \boltzmann{E_n}
        = \sum_{n=-\infty}^{\infty} 
            \boltzmann{
                \frac{2\pi^{2}\hbar^{2}}{mL^{2}} n^{2}
            }
        = \vartheta_{3}
            \left(
                0, \boltzmann{
                        \frac{2\pi^{2}\hbar^{2}}{mL^{2}}
                    }
            \right)
    ,
\end{equation}
where $\vartheta_{3}$ is the elliptic theta function. In three dimensions, the partition function is modified from~\Cref{eq:free-z} to $\left(Z_1(\beta)\right)^3$.

Once the single-particle partition function is known, the $N$-particle bosonic partition function is determined~\cite{Borrmann1993} using the recurrence relation $Z_N=\sum_{k=1}^{N} z_k Z_{N-k}$, where $z_k=Z_1(k \beta)$ is the single-particle partition function at inverse temperature $k\beta$.  %
Based on this formula, the internal energy can also be evaluated recursively using\cite{krauth2006statistical}
\begin{equation}
\label{eq:noninteracting-energy-recurrence}
    \ev{E} = -\frac{1}{N Z_N} \sum_{k=1}^{N}
        \left(
            \pdv{z_k}{\beta} Z_{N-k} + 
            z_k \pdv{Z_{N-k}}{\beta}
        \right)
    .
\end{equation}
We evaluated the derivatives $\frac{\partial z_k}{\partial \beta}$ %
by finite differences.

\subsection{Sinusoidal trap}
\label{sec:analytical-cosine-trap}
The wave function of a particle in a box of side length $L$, with periodic boundary conditions, in a periodic external field $V(x)=V_0 \cos \left( k x \right)$ where $k=2\pi / L$, obeys the time-independent Schr\"{o}dinger equation
\begin{equation*}
    -\frac{\hbar^{2}}{2m}
    \dv[2]{\psi\left(x\right)}{x}
    + V_0 \cos\left(k x\right) \psi \left(x\right)
    = E \psi\left(x\right).
\end{equation*}
By making the substitution $y = kx / 2$ and rearranging terms we get \emph{Mathieu's differential equation}:
\begin{equation*}
    \dv[2]{\psi\left(x\right)}{y}
    +\left[a-2q\cos\left(2y\right)\right]\psi=0,
\end{equation*}
where
\begin{equation*}
    a=\frac{8mE}{\hbar^{2}k^{2}}, \quad q=\frac{4mV_0}{\hbar^{2}k^{2}}.
\end{equation*}
Since $x$ has periodic boundary conditions with period $2\pi/L$, the substituted $y$ has periodic boundary conditions of period $\pi$.
Such solutions exist only for specific values of $a$ and $q$. For a given $q \in \mathbb{R}$, there are infinitely many values of $a$ that yield periodic solutions, and hence an infinite number of discrete energy levels. The different values of $a$ are called \emph{characteristic numbers}, and they are typically listed as two sequences, $a_n(q)$ and $b_n(q)$.~\cite{McLachlan1951_Book}
For $q>0$ (that is, $V_0 > 0$), the values $a_n$ and $b_n$ are conventionally ordered %
\begin{equation*}
    a_0 < b_1 < a_1 < b_2 < a_2 < \dots
\end{equation*}
and the corresponding energy levels $E_0 < E_1 < \dots$ are 
\begin{equation*}
    \frac{\hbar^{2}k^{2}}{8m}a_0 < \frac{\hbar^{2}k^{2}}{8m}b_1 < \frac{\hbar^{2}k^{2}}{8m}a_1 < \frac{\hbar^{2}k^{2}}{8m}b_2 < \frac{\hbar^{2}k^{2}}{8m}a_2 < \ldots.
\end{equation*}
We use \texttt{scipy} to evaluate the characteristic numbers, and evaluate the partition function $Z_1(\beta)$ numerically by summing over the lowest $7$ energy levels.
The derivatives are evaluated by a similar sum over energy levels, $\frac{\partial z_k}{\partial \beta} = kd\sum_{n}{E_n \exp{-k\beta E_n}}$ where $d$ is the dimension.
Then, the energy is computed as above by the recurrence relation of~\Cref{eq:noninteracting-energy-recurrence}. 
\section{Proofs}
\label{sec:derivations}
In this section, we provide derivations for parts of the distinguishable and bosonic PIMD algorithms with PBC: 
\begin{enumerate}
    \item The expression for the probabilities of winding vectors that were used to evaluate the forces in distinguishable PIMD with PBC (\Cref{sec:derivations-distinguishable});
    \item The correctness of the bosonic spring potential for sampling the bosonic partition function with PBC (\Cref{sec:deriviations-bosonic-correctness});
    \item The expressions for the probabilities of winding vectors and of the connection probabilities that were used to evaluate the forces in distinguishable PIMD with PBC (\Cref{sec:proofs-connection-probabilities}).
\end{enumerate}

\subsection{Distinguishable PIMD with PBC: Derivations}
\label{sec:derivations-distinguishable}
We derive the expression for the probability of a winding vector, which was used for evaluating the forces in distinguishable PIMD with PBC.
\begin{theorem}
\label{thm:distinguishable-prob-si}
In PIMD for distinguishable particles with PBC, at given positions, the probability of a winding configuration $\windingsequence$ %
~(\maineqref{eq:distinguishable-representative-distribution}{20}) satisfies
\begin{equation}
    \Predistpermwinding{\text{id}}{\windingsequence} = \prod_{\ell=1}^{N}{\prod_{j=1}^{P}{\Prwinding{\ell}{j}}},
\end{equation}
where 
\begin{equation}
    \Prwinding{\ell}{j} = 
    \Prwindingextshortdistinguishable{\ell}{j}.
\end{equation}
\end{theorem}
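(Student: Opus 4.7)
The plan is to establish the factorization by direct algebraic manipulation of the definition of $\Predistpermwinding{\text{id}}{\windingsequence}$ from \maineqref{eq:distinguishable-representative-distribution}{20}, using the two key observations already laid out in the main text: that $\boltzmann{\Epermdistinguishable{\windingsequence}}$ factorizes into a product of per-spring weights, and that the total distinguishable potential $\boltzmann{\Vdist}$ admits the product-of-sums form of \maineqref{eq:distinguishable-potential-efficient}{17}.

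First, I would rewrite the numerator. Substituting the spring energy $\Epermdistinguishable{\windingsequence}$ from \maineqref{eq:dist-winding-spring-energy}{14} and using the definition of $\edgewindingspecificdistinguishable{\ell}{j}$ in \maineqref{eq:edgewindingspecific}{15}, the exponential of a sum splits into a product
\begin{equation*}
\boltzmann{\Epermdistinguishable{\windingsequence}} = \prod_{\ell=1}^{N}\prod_{j=1}^{P} \edgewindingspecificdistinguishable{\ell}{j}.
\end{equation*}
Next, I would use the compact expression for the denominator, $\boltzmann{\Vdist} = \prod_{\ell=1}^{N}\prod_{j=1}^{P} \edgewinding{\ell}{j}{\ell}{j+1}$, which follows from \maineqref{eq:distinguishable-potential-efficient}{17} and is itself a consequence of the fact that the per-spring weight depends only on a single winding, so the overall sum over $\windingsequence$ in $\boltzmann{\Vdist}$ distributes across springs.

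Forming the ratio, both numerator and denominator are products over the same index set $(\ell,j)$, so they can be combined termwise to give
\begin{equation*}
\Predistpermwinding{\text{id}}{\windingsequence} = \prod_{\ell=1}^{N}\prod_{j=1}^{P} \frac{\edgewindingspecificdistinguishable{\ell}{j}}{\edgewinding{\ell}{j}{\ell}{j+1}}.
\end{equation*}
Each factor is exactly the expression for $\Prwinding{\ell}{j}$ appearing in \maineqref{eq:pr-winding}{23}, which yields the two claimed identities simultaneously. To justify calling this quantity a probability over the single winding $\winding{\ell}{j}$, I would note that summing the factor over $\winding{\ell}{j}$ reproduces the denominator by the definition of $\edgewinding{\ell}{j}{\ell}{j+1}$ in \maineqref{eq:edgewinding}{16}, so it normalizes to one. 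Consistency with the factorization is confirmed by summing $\Predistpermwinding{\text{id}}{\windingsequence}$ over all $\windingsequence$, where the product-of-sums structure gives $1$.

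There is no real obstacle here: the statement is essentially a restatement of the fact that in the distinguishable case the spring weights of different beads are statistically independent, a property that fails in the bosonic case because the permutation couples the per-particle last springs. The only care I would take is to be explicit about the split of $\sum_{\windingsequence}$ into independent sums $\sum_{\winding{\ell}{j}}$ per \maineqref{eq:winding-set-sum-definition}{11}, which is what makes $\boltzmann{\Vdist}$ split as a product in the first place.
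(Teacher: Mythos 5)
Your proof is correct and rests on the same observation the paper uses: each spring weight depends on a single winding vector, so both the Boltzmann factor $\boltzmann{\Epermdistinguishable{\windingsequence}}$ and the normalization $\boltzmann{\Vdist}$ factorize over springs and the ratio combines termwise. The paper routes the argument through the marginal-probability computation of its interior-bead theorem (isolating the factor depending on $\winding{\ell}{j}$ and cancelling the $\winding{\ell}{j}$-independent remainder), whereas you factorize the joint distribution directly; your version is, if anything, slightly more complete for this particular statement, since the theorem asserts the factorization of the joint and not merely the form of each marginal.
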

\begin{proof}
    The proof is identical to~\Cref{thm:interior-winding-probability} below (winding probability of interior beads in bosonic PIMD with PBC), except that there is no %
    sum over permutations
    in the distinguishable particle case---the only permutation is the identity, where each particle is a separate ring.
\end{proof}

\subsection{Bosonic PIMD with PBC: Derivations}

\subsubsection{Correctness of bosonic PIMD with PBC}
\label{sec:deriviations-bosonic-correctness}
Here we prove the correctness of using the bosonic spring potential with PBC, which was defined by a recurrence relation, for sampling the bosonic partition function with PBC.

\begin{theorem}
\label{thm:bosonic-pimd-correctness}
    The bosonic spring potential with PBC $\Vall$, defined by the recurrence relation 
    in~\maineqref{eq:our-forward-potential-recurrence}{24}
    samples the bosonic partition function of~\maineqref{eq:z-bosonic-pbc-explicit}{6}.
\end{theorem}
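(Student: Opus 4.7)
The plan is to establish the closed-form expansion
\begin{equation*}
\boltzmann{\Vall} = \frac{1}{\fact{N}} \sum_{\sigma} \sum_{\windingsequence} \boltzmann{\Eperm{\sigma,\windingsequence}},
\end{equation*}
since inserting this into the standard path-integral formalism immediately reproduces~\Cref{eq:z-bosonic-pbc-explicit}. I would proceed by induction on $N$, closely paralleling the correctness proof of the non-periodic bosonic recurrence in Ref.~\citenum{quadratic-pidmb}. The new task relative to that proof is to thread the sum over $\windingsequence$ through every step.

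The heart of the argument is a cycle-winding factorization lemma. Fix a permutation $\sigma$ and let $C_1,\ldots,C_m$ be its disjoint cycles. Each spring in the ring-polymer configuration---and hence each winding vector $\winding{\ell}{j}$---belongs to exactly one cycle, so $\Eperm{\sigma,\windingsequence}$ splits as a sum over cycles and the sum over all windings factorizes:
\begin{equation*}
\sum_{\windingsequence} \boltzmann{\Eperm{\sigma,\windingsequence}} \;=\; \prod_{i=1}^{m} \boltzmann{\Efromto{u_i}{v_i}},
\end{equation*}
where each factor is exactly the cycle energy defined in~\Cref{eq:Ek}, in which the internal winding sum has already been absorbed. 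This collapses the problem to a combinatorial rearrangement over permutations alone, which is precisely the structure treated in the previous algorithm.

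For the inductive step, I would partition permutations by the size $k$ of the cycle containing particle $N$. Standard counting of cyclic orderings and of which $k-1$ particles join $N$ in that cycle, together with the factorization lemma above, shows that the contribution from size-$k$ cycles equals $\frac{1}{N}\boltzmann{\Vto{N-k} + \Enk{N}{k}}$. The factor $\boltzmann{\Vto{N-k}}$ is supplied by the induction hypothesis applied to particles $1,\ldots,N-k$, and the factor $\boltzmann{\Enk{N}{k}}$ is the cycle energy of the newly closed cycle with all its winding sums built in. Summing over $k=1,\ldots,N$ reproduces the recurrence~\Cref{eq:our-forward-potential-recurrence}, closing the induction.

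The main obstacle is exactly the coupling between windings and exchange flagged in~\Cref{sec:background-pbc}: the winding of the last bead of particle $\ell$ lives on a spring whose other endpoint depends on $\sigma(\ell)$, so naively the sums over $\sigma$ and $\windingsequence$ do not decouple. The resolution is that, once $\sigma$ is fixed, the configuration is still a disjoint union of closed ring polymers, and each winding vector sits on exactly one spring of exactly one cycle. The factorization lemma captures this formally, and after the cycle energies of~\Cref{eq:Ek} are defined with their winding sums absorbed, the recursive argument is structurally identical to the non-periodic case.
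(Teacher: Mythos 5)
There is a genuine gap: the ``closed-form expansion'' you aim to establish, $\boltzmann{\Vall} = \frac{1}{\fact{N}}\sum_\sigma\sum_{\windingsequence}\boltzmann{\Eperm{\sigma,\windingsequence}}$, is false as a pointwise identity in the bead positions, and your inductive step relies on it holding pointwise. The recurrence~\Cref{eq:our-forward-potential-recurrence} only ever produces cycles over \emph{contiguous, increasing} blocks of particle indices $\{N-k+1,\ldots,N\}$, whereas a general $\sigma$ has cycles over arbitrary subsets in arbitrary cyclic order. Already for $N=3$ (and $\windingcap=0$), the recurrence yields the term $\tfrac{1}{3}\boltzmann{\left(\Efromto{1}{1}+\Efromto{2}{3}\right)}$ for the transposition class and omits the cycle $\{1,3\}$ entirely, while the full permutation sum contains $\tfrac{1}{6}$ of each of the three transpositions, whose energies differ pointwise. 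Equality holds only after integrating over $D(\mathcal{V})$, by relabeling integration variables among particles. Concretely, your step ``standard counting of cyclic orderings and of which $k-1$ particles join $N$ in that cycle \ldots\ shows that the contribution from size-$k$ cycles equals $\frac{1}{N}\boltzmann{\left(\Vto{N-k}+\Enk{N}{k}\right)}$'' fails: the $\binom{N-1}{k-1}\pfact{k-1}$ distinct cycles through particle $N$ have distinct energies and leave behind different particle subsets, so they cannot be collapsed onto the single representative cycle $(N-k+1,\ldots,N)$ without an integration argument.

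This is exactly why the paper splits the proof in two. \Cref{thm:bosonic-potential-sum-representatives} proves the pointwise identity with $\rep{\sigma}$ (a representative permutation with the same cycle type but contiguous cycles) in place of $\sigma$; the proof of~\Cref{thm:bosonic-pimd-correctness} then observes that
\begin{equation*}
\int_{D(\mathcal{V})} d\pos_1\ldots d\pos_N\,\boltzmann{\Eperm{\sigma,\windingsequence}} = \int_{D(\mathcal{V})} d\pos_1\ldots d\pos_N\,\boltzmann{\Eperm{\rep{\sigma},\windingsequence}},
\end{equation*}
using that $\bar U$ is permutation invariant. Your cycle--winding factorization lemma is correct and is indeed the paper's key new ingredient (it is~\Cref{eq:cycle-energies-from-sum-cycle-energies-windingsequence} and the surrounding induction over windings), so the proposal can be repaired either by inserting the representative-permutation device or by carrying the integral through every step of your induction; as written, though, the central identity and the counting step do not hold.
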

\begin{proof}
The proof uses the equivalent expression for the potential derived in~\Cref{thm:bosonic-potential-sum-representatives} below.
As in the original algorithm~\cite{quadratic-pidmb}, the representative permutation $\rep{\sigma}$ (defined in Ref.~\citenum{quadratic-pidmb}) has the same cycle structure as the permutation $\sigma$. Thus, for every choice of winding vectors for all beads $\windingsequence$, the configurations $(\sigma, \windingsequence)$ and $(\rep{\sigma}, \windingsequence)$ contribute the same to the partition function, namely %
\begin{equation*}
    \int_{D(\mathcal{V})}{
        d\pos_1 \ldots d\pos_N \, 
        \boltzmann{\Eperm{\sigma, \windingsequence}}
    }
    =
    \int_{D(\mathcal{V})}{
        d\pos_1 \ldots d\pos_N \, 
        \boltzmann{\Eperm{\rep{\sigma}, \windingsequence}}
    }.
\end{equation*}
Hence, recalling also that the physical potential $\bar{U}$ is unaffected by exchange,
\begin{align}
    \mathcal{Z}^{\text B} _{\text{PBC}}
    &\propto 
    \int_{D(\mathcal{V})}{
        d\pos_1 \ldots d\pos_N \, 
        \frac{1}{\fact{N}}
        \sum_{\sigma}{
            \sum_{\windingsequence}{
            \boltzmann{\left(\Eperm{\sigma,\windingsequence} + \bar{U}\right)}
        }
        }
    }
    \\
    &=
    \int_{D(\mathcal{V})}{
        d\pos_1 \ldots d\pos_N \, 
        \frac{1}{\fact{N}}
        \sum_{\sigma}{
            \sum_{\windingsequence}{
            \boltzmann{\left(\Eperm{\rep{\sigma},\windingsequence} + \bar{U}\right)}
        }
        }
    }.
    \\
    \intertext{Applying~\Cref{thm:bosonic-potential-sum-representatives}, this is exactly} %
    &=
    \int_{D(\mathcal{V})}{
        d\pos_1 \ldots d\pos_N \, 
        \boltzmann{\left(\Vall + \bar{U}\right)}
    },
\end{align}
as desired.
\end{proof}

\begin{theorem}
\label{thm:bosonic-potential-sum-representatives}
    The bosonic spring potential with PBC $\Vall$ defined by the recurrence relation
    (\maineqref{eq:our-forward-potential-recurrence}{24})
    is equivalently defined by 
    \begin{equation}
    \label{eq:sum-over-representatives-si}
        \boltzmann{\Vall} = 
        \frac{1}{\fact{N}} \sum_{\sigma}{
            \sum_{\windingsequence}{
            \boltzmann{\Eperm{\rep{\sigma}, \windingsequence}}
        }
        },
    \end{equation}
    where $\repsym$ is defined in Ref.~\citenum{quadratic-pidmb}.
\end{theorem}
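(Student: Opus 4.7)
The plan is to proceed by induction on $N$, mirroring the structure of the analogous proof in the non-periodic case from Ref.~\citenum{quadratic-pidmb} but carefully tracking the additional sum over windings at each step. The base case $N=0$ is immediate, since $\Vto{0}=0$ matches the empty product on the right-hand side of~\Cref{eq:sum-over-representatives-si}.

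For the inductive step, I would start from the recurrence relation
\begin{equation*}
\boltzmann{\Vall} = \frac{1}{N} \sum_{k=1}^{N}{\boltzmann{\left(\Vto{N-k} + \Enk{N}{k}\right)}},
\end{equation*}
expand $\boltzmann{\Enk{N}{k}}$ using its definition in~\Cref{eq:Ek} as a sum over the windings $\windingsequencefromto{N-k+1}{N}$ of the beads belonging to particles $N-k+1,\dots,N$ connected in a single cycle, and apply the induction hypothesis to $\boltzmann{\Vto{N-k}}$, which equals $\frac{1}{(N-k)!}$ times a sum over permutations $\sigmaprime$ of $\{1,\dots,N-k\}$ and their windings $\windingsequencefromto{1}{N-k}$. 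Multiplying the two expressions gives a sum over all permutations of $\{1,\dots,N\}$ in which the cycle containing $N$ consists exactly of the consecutive particles $N-k+1,\dots,N$, together with a full sum over the windings of all beads of all $N$ particles.

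The key observation making the induction go through is that the spring energy $\Eperm{\sigma,\windingsequence}$ is a sum of per-spring contributions, and each spring lies entirely within one cycle of $\sigma$. Hence $\boltzmann{\Eperm{\sigma,\windingsequence}}$ factors as a product over the disjoint cycles of $\sigma$, and the total sum $\sum_{\windingsequence}$ factors accordingly into independent winding sums within each cycle. This means the cycle-structure argument from the non-periodic proof transports verbatim, with each per-cycle Boltzmann weight simply replaced by its winding-summed counterpart $\boltzmann{\Efromto{\particledown}{\particleup}}$. Finally, invoking the defining property of $\repsym$ from Ref.~\citenum{quadratic-pidmb}---that $\rep{\sigma}$ depends only on the cycle structure of $\sigma$ and that every permutation with a given cycle structure is mapped to a canonical representative in which cycles occupy consecutive indices---one converts the sum restricted to such ``canonical'' permutations into an unrestricted sum $\frac{1}{N!}\sum_{\sigma}\boltzmann{\Eperm{\rep{\sigma},\windingsequence}}$ with the correct multiplicity.

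The main obstacle is the combinatorial bookkeeping: verifying that the prefactor $1/N$ in the recurrence combines with the inductive prefactor $1/(N-k)!$ to yield $1/N!$ after accounting for the multiplicity with which each representative permutation appears in the sum over all $\sigma$. This counting is purely a property of $\repsym$ and the cycle decomposition, and was already established in the non-periodic case; the content added here is the observation in the previous paragraph that the sum over windings factors over cycles, so the combinatorial identity is not disturbed by the additional sums in $\windingsequence$. Once this is in place, the right-hand side of the recurrence collapses into the stated expression, completing the induction.
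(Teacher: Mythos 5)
Your proof is correct and rests on the same key observation as the paper's: the Boltzmann weight and the winding sum factorize over the disjoint cycles of the permutation, so the non-periodic combinatorial identity for $\repsym$ carries over unchanged. The paper organizes this slightly differently---it first applies the prior theorem verbatim at each fixed winding configuration $\windingsequence$ and then shows that the winding sum commutes with the recurrence---but the substance is the same induction.
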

\begin{proof}
    The proof follows the same structure as the proof in the original algorithm (Theorem 1 in the SI of Ref.~\citenum{quadratic-pidmb}), showing that the r.h.s.\ satisfies the recurrence equation that in the main text was used to define the potential: 
    $\boltzmann{\Vall} = \frac{1}{N} \sum_{k=1}^{N}{\boltzmann{\left(\Vto{N-k} + \Enk{N}{k}\right)}}$ and $\Vto{0} = 0$.
    
    Fix a specific configuration $\windingsequence$. Repeating the proof of Theorem 1 in the SI of Ref.~\citenum{quadratic-pidmb}---the only difference being that the displacement from $\beadpos{\ell}{P}$ to the next bead is modified with $\winding{\ell}{P} L$ in all expressions for the spring energy---shows that 
    \begin{equation}
        \boltzmann{\Vtospecificwinding{N}}
        =
        \frac{1}{\fact{N}} \sum_{\sigma}{
            \boltzmann{\Eperm{\rep{\sigma}, \windingsequence}}},
    \end{equation}
    where $\boltzmann{\Vtospecificwinding{N}}$ is defined by the recurrence relation
    \begin{equation}
    \label{eq:our-forward-recurrence-specific-windingsequence}
        \boltzmann{\Vtospecificwinding{N}} = \frac{1}{N} \sum_{k=1}^{N}{\boltzmann{\left(\Vtospecificwinding{N-k} + \Enkspecificwinding{N}{k}\right)}},
    \end{equation}
    and $\Vtospecificwinding{0} = 0$.
    In~\Cref{eq:our-forward-recurrence-specific-windingsequence}, $\Efromtospecificwinding{\particledown}{\particleup}$ is the energy of the cycle connecting particles $\particledown,\ldots,\particleup$ with the winding vectors $\windingsequence$: %
    \begin{equation}
    \label{eq:Ekspecificwinding}
        \Efromtospecificwinding{\particledown}{\particleup} = \springenergyprefix 
        \sum\limits_{\ell=\particledown}^{\particleup}{\sum\limits_{j=1}^{P}{
                \rdiffsquaredwinding{\ell}{j}{\ell}{j+1}
            }
        }.
    \end{equation}
    Taking a sum over windings, we have
    \begin{equation}
    \label{eq:proof-of-recurrence-relation-with-winding}
        \sum_{\windingsequence}{
            \boltzmann{\Vtospecificwinding{N}}
        }
        =
        \sum_{\windingsequence}{\frac{1}{\fact{N}} \sum_{\sigma}{
            \boltzmann{\Eperm{\rep{\sigma}, \windingsequence}}}}.
    \end{equation}
    
    From the definition of cycle energies in~\maineqref{eq:Ek}{25}, the ordinary cycle energies $\boltzmann{\Efromto{\particledown}{\particleup}}$ are obtained by summing $\boltzmann{\Efromtospecificwinding{\particledown}{\particleup}}$ over the choices of winding vectors for the beads in the cycle:
    \begin{equation}
    \label{eq:cycle-energies-from-sum-cycle-energies-windingsequence}
        \boltzmann{\Enk{N}{k}} = \sum_{\windingsequencefromto{N-k+1}{N}}{\boltzmann{\Enkspecificwinding{N}{k}}}.
    \end{equation}
    Hence, a similar relationship holds for the potentials $\boltzmann{\Vto{\particleup}}$, which is obtained by summing $\boltzmann{\Vtospecificwinding{\particleup}}$ over the relevant winding vectors:
    \begin{equation}
    \label{eq:forward-potential-from-sum-forward-potential-windingsequence}
        \boltzmann{\Vto{N-k}} = \sum_{\windingsequencefromto{1}{N-k}}{\boltzmann{\Vtospecificwinding{N-k}}}.
    \end{equation}
    \Cref{eq:forward-potential-from-sum-forward-potential-windingsequence} holds because from the recurrence relation \Cref{eq:our-forward-recurrence-specific-windingsequence} and summing over winding vectors yields
    \begin{align*}
        \sum_{\windingsequencefromto{1}{\particleup}}{\boltzmann{\Vtospecificwinding{\particleup}}} 
        &= 
        \frac{1}{N} \sum_{k=1}^{\particleup}{
            \sum_{\windingsequencefromto{1}{\particleup}}\boltzmann{\Vtospecificwinding{\particleup-k}}\boltzmann{\Enkspecificwinding{\particleup}{k}}
        }
        \\
        &=
        \frac{1}{N} \sum_{k=1}^{\particleup}{
            \sum_{\windingsequencefromto{1}{\particleup-k}}\boltzmann{\Vtospecificwinding{\particleup-k}}
            \sum_{\windingsequencefromto{\particleup-k+1}{\particleup}}\boltzmann{\Enkspecificwinding{\particleup}{k}}
        }
        \\
        &\underset{\mbox{\scriptsize Eq.~(\ref{eq:cycle-energies-from-sum-cycle-energies-windingsequence})}}{=}
        \frac{1}{N} \sum_{k=1}^{\particleup}{
            \sum_{\windingsequencefromto{1}{\particleup-k}}\boltzmann{\Vtospecificwinding{\particleup-k}}
            \boltzmann{\Enk{\particleup}{k}}
        }
        \\
        &=
        \frac{1}{N} \sum_{k=1}^{\particleup}{
            \boltzmann{\Vto{\particleup-k}}
            \boltzmann{\Enk{\particleup}{k}}
        },
    \end{align*}
    where the last equality uses an induction on $\particleup$.
    
    Plugging~\Cref{eq:forward-potential-from-sum-forward-potential-windingsequence} into~\Cref{eq:proof-of-recurrence-relation-with-winding} provides the desired result.
\end{proof}

\subsubsection{Connection probabilities}
\label{sec:proofs-connection-probabilities}
In this section, we show that the connection probabilities, as well as the recurrence for the partial potentials---used for evaluating the forces in bosonic PIMD with PBC---retain the same form as in the previous algorithm\cite{quadratic-pidmb}, except for the change in the cycle energies to sum over winding vectors, as explained in the main text. %

\begin{theorem}
\label{lem:direct-link-probability}
For every $1 \leq \ell < N$, %
\begin{equation*}
	\Prrepnext{\ell}{\ell + 1} 
    = 
    1 - \frac{1}{\boltzmann{\Vall}} \boltzmann{\left(\Vto{\ell} + \Vfrom{\ell + 1}\right)}.
\end{equation*}
\end{theorem}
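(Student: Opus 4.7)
The plan is to exploit the fact that the total connection probability sums to one. By the construction of the representative permutation $\repsym$ (inherited verbatim from the previous algorithm and unaffected by the sum over windings, as noted in the paper), for any fixed $\ell$ the possible values of $\rep{\sigma}(\ell)$ are partitioned as $\{1,\ldots,\ell\} \cup \{\ell+1\}$: either the cycle currently growing through particle $\ell$ closes back to some earlier member $\ell' \leq \ell$, or it is extended by one step to particle $\ell+1$. Since $\Prrepnext{\ell}{\ell'}=0$ for all $\ell' > \ell+1$, the events $\{\rep{\sigma}(\ell)=\ell'\}_{\ell'=1}^{\ell+1}$ exhaust a partition of the configuration space, so
\begin{equation*}
    \Prrepnext{\ell}{\ell+1} \;=\; 1 \;-\; \sum_{\ell'=1}^{\ell} \Prrepnext{\ell}{\ell'}.
\end{equation*}

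Next I would substitute the close-cycle formula (\Cref{eq:connection-prob-close-cycle-main}, established independently as \Cref{lem:close-cycle-probability} via the same decomposition of permutations into disjoint cycles that motivated the recurrence in \Cref{eq:our-forward-potential-recurrence}) for each term of the sum, and factor out the parts that do not depend on $\ell'$:
\begin{equation*}
    \sum_{\ell'=1}^{\ell} \Prrepnext{\ell}{\ell'}
    \;=\;
    \frac{\boltzmann{\Vfrom{\ell+1}}}{\boltzmann{\Vall}} \cdot \frac{1}{\ell} \sum_{\ell'=1}^{\ell} \boltzmann{\left(\Vto{\ell'-1} + \Efromto{\ell'}{\ell}\right)}.
\end{equation*}
Reindexing $k = \ell - \ell' + 1$ identifies the remaining sum with the right-hand side of the defining recurrence for $\Vto{\ell}$ (\Cref{eq:our-forward-potential-recurrence} applied at $N=\ell$), so the inner sum collapses to $\ell\cdot\boltzmann{\Vto{\ell}}$. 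Subtracting from one then yields the stated formula.

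The main obstacle is not algebraic but structural: I need to make sure that $\Prrepnext{\ell}{\ell'}=0$ for $\ell'>\ell+1$ in the PBC setting as well. This relies on the specific construction of $\repsym$ from Ref.~\citenum{quadratic-pidmb}, and I would justify it by noting that winding vectors only modify the spring energy of each configuration (through $\Efromto{\particledown}{\particleup}$) but neither the labelling of the cycles nor the assignment of a representative to each equivalence class of permutations. Consequently the support of $\rep{\sigma}(\ell)$ is unchanged, and the partition-of-unity argument goes through unchanged, leaving only the recurrence-driven bookkeeping described above.
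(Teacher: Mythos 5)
Your proof is correct, but it takes a genuinely different route from the paper's. The paper proves the identity by conditioning on the winding configuration: for a fixed $\windingsequence$ it invokes the non-PBC result of Ref.~\citenum{quadratic-pidmb} to get $\Prrepnextspecificwinding{\ell}{\ell+1} = 1 - \boltzmann{\left(\Vtospecificwinding{\ell} + \Vfromspecificwinding{\ell+1}\right)}/\boltzmann{\Vtospecificwinding{N}}$, computes $\Pr(\windingsequence) = \boltzmann{\Vtospecificwinding{N}}/\boltzmann{\Vall}$, and marginalizes, using the fact that summing the winding-resolved potentials over windings reproduces $\Vto{\ell}$ and $\Vfrom{\ell+1}$. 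You instead obtain the direct-link probability as the complement of the close-cycle probabilities, combining normalization of the marginal distribution of $\rep{\sigma}(\ell)$ with \Cref{lem:close-cycle-probability} and the defining recurrence for $\Vto{\ell}$; your reindexing $k=\ell-\ell'+1$ does collapse the inner sum to $\ell\,\boltzmann{\Vto{\ell}}$, so the algebra is sound. The two ingredients you lean on are both legitimately available: \Cref{lem:close-cycle-probability} is established in the paper by the same conditioning-on-windings technique, independently of the present theorem, so there is no circularity; and the vanishing of $\Prrepnext{\ell}{\ellprime}$ for $\ellprime>\ell+1$ is a purely combinatorial property of the representative map $\repsym$, which, as you correctly observe, is unaltered by the windings. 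What your route buys is economy and insight---the direct-link formula is exposed as a forced consequence of the close-cycle formulas and the recurrence, making the mutual consistency of the connection probabilities manifest; what the paper's route buys is uniformity, since the identical lifting argument handles every connection probability without requiring the support/partition fact as a separate ingredient.
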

\begin{proof}
    We use the same approach used in the proof of~\Cref{thm:bosonic-potential-sum-representatives}.
    With a specific choice of winding vectors $\windingsequence$ for all the beads, the proof of Theorem 2 in the SI of Ref.~\citenum{quadratic-pidmb} shows that
    \begin{equation}
        \Prrepnextspecificwinding{\ell}{\ell + 1} = 
        1 - \frac{1}{\boltzmann{\Vtospecificwinding{N}}} \boltzmann{\left(\Vtospecificwinding{\ell} + \Vfromspecificwinding{\ell + 1}\right)},
    \end{equation}
    where the potentials $\Vtospecificwinding{1}, \ldots, \Vtospecificwinding{N}$ are defined by the recurrence relation
    \begin{equation}
    \label{eq:potentials-forward-si-specificwinding}
        \boltzmann{\Vfromspecificwinding{u}} = \sum_{\ell=u}^{N}{\frac{1}{\ell} \boltzmann{\left(\Efromtospecificwinding{u}{\ell} + \Vfromspecificwinding{\ell+1}\right)}},
    \end{equation}
    and $\Vfromspecificwinding{N+1} = 0$, with $\Efromtospecificwinding{u}{\ell}$ defined in the proof of~\Cref{thm:bosonic-potential-sum-representatives} above (\Cref{eq:Ekspecificwinding}).

    To obtain the connection probability regardless of winding, we rely on the law of total probability to write
    \begin{equation}
    \label{eq:prob-condition-on-all-windings}
        \Prrepnext{\ell}{\ell + 1}
        =
        \sum_{\windingsequence}{\Prrepnextspecificwinding{\ell}{\ell + 1} \cdot \Pr(\windingsequence)}.
    \end{equation}
    Now the probability for a winding configuration is
    \begin{equation}
        \Pr(\windingsequence) = \frac{1}{\boltzmann{\Vall}} \frac{1}{\fact{N}} \sum_{\sigma}{\boltzmann{\Eperm{\rep{\sigma}, \windingsequence}}}
        = \frac{1}{\boltzmann{\Vall}}{\boltzmann{\Vtospecificwinding{N}}},
    \end{equation}
    where the last equality was shown in the proof of~\Cref{thm:bosonic-potential-sum-representatives} (\Cref{eq:forward-potential-from-sum-forward-potential-windingsequence}).
    Plugging this expression into~\Cref{eq:prob-condition-on-all-windings} and performing the sum over winding vectors yields
    \begin{equation*}
	\Prrepnext{\ell}{\ell + 1} 
    = 
    1 - \frac{1}{\boltzmann{\Vall}} \boltzmann{\left(\Vto{\ell} + \Vfrom{\ell + 1}\right)},
\end{equation*}
    provided that $\boltzmann{\Vto{\particleup}} = \sum_{\windingsequencefromto{1}{\particleup}}{\boltzmann{\Vtospecificwinding{\particleup}}}$. 
    To see that this indeed holds, we take a sum over windings on the recursion of~\Cref{eq:potentials-forward-si-specificwinding}, yielding
    \begin{align*}
       \sum_{\windingsequencefromto{\particledown}{N}}{\boltzmann{\Vfromspecificwinding{\particledown}}} 
       = 
       \sum_{\ell=u}^{N}{\frac{1}{\ell} 
            \sum_{\windingsequencefromto{\particledown}{\ell}}{\boltzmann{\Efromtospecificwinding{\particledown}{\ell}}
            \sum_{\windingsequencefromto{\ell+1}{N}}{\Vfromspecificwinding{\ell+1}}}}
        \underset{\mbox{\scriptsize Eq.~(\ref{eq:cycle-energies-from-sum-cycle-energies-windingsequence})}}{=}
            \sum_{\ell=u}^{N}{\frac{1}{\ell} 
            {{\Efromto{\particledown}{\ell}}
            \sum_{\windingsequencefromto{\ell+1}{N}}{\Vfromspecificwinding{\ell+1}}}}.
    \end{align*}
    Thus $\sum_{\windingsequencefromto{\particledown}{N}}{\boltzmann{\Vfromspecificwinding{\particledown}}}$ satisfies the same recurrence relation that was used to define $\boltzmann{\Vto{\particleup}}$ in~\Cref{eq:potentials-forward-si}, showing that they coincide. The claim follows.
\end{proof}

\begin{theorem}
\label{lem:close-cycle-probability}
For every $1 \leq \particledown \leq \ell \leq N$,
\begin{equation*}
 \Prrepnext{\ell}{\ell'} = 
        \frac{1}{\ell} 
        \frac{\boltzmann{\left(\Vto{\ell'-1} + \Efromto{\ell'}{\ell} + \Vfrom{\ell+1}\right)}}{\boltzmann{\Vall}} ,
\end{equation*}
\end{theorem}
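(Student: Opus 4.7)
The proof plan mirrors the approach used in \Cref{lem:direct-link-probability}, lifting the analogous statement from the non-periodic algorithm of Ref.~\citenum{quadratic-pidmb} to the periodic setting via the law of total probability and the ``sum over windings'' identities already established in \Cref{thm:bosonic-potential-sum-representatives,lem:direct-link-probability}.

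\textbf{Step 1: Conditional version at fixed winding.} First I would fix an arbitrary winding configuration $\windingsequence$ for all the beads. For such fixed windings, the representative map $\repsym$ and the combinatorial structure of the cycle decomposition are identical to the non-periodic case---the only modification is that the relevant spring energies carry the $\winding{\ell}{j} L$ displacement. I would therefore invoke the corresponding theorem in the SI of Ref.~\citenum{quadratic-pidmb} (the analog used inside \Cref{lem:direct-link-probability}) to conclude that, conditioning on $\windingsequence$,
\begin{equation*}
    \Prrepnextspecificwinding{\ell}{\ell'} = \frac{1}{\ell}\,\frac{\boltzmann{\left(\Vtospecificwinding{\ell'-1} + \Efromtospecificwinding{\ell'}{\ell} + \Vfromspecificwinding{\ell+1}\right)}}{\boltzmann{\Vtospecificwinding{N}}},
\end{equation*}
where $\Vtospecificwinding{\cdot}$, $\Vfromspecificwinding{\cdot}$, and $\Efromtospecificwinding{\cdot}{\cdot}$ are the winding-specific analogs introduced in the proof of \Cref{thm:bosonic-potential-sum-representatives}.

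\textbf{Step 2: Marginalize over windings.} By the law of total probability,
\begin{equation*}
    \Prrepnext{\ell}{\ell'} = \sum_{\windingsequence} \Prrepnextspecificwinding{\ell}{\ell'} \cdot \Pr(\windingsequence),
\end{equation*}
and the probability of a winding configuration was already computed in the proof of \Cref{lem:direct-link-probability} to be $\Pr(\windingsequence) = \boltzmann{\Vtospecificwinding{N}}/\boltzmann{\Vall}$. Substituting the conditional expression from Step 1 causes the $\boltzmann{\Vtospecificwinding{N}}$ factors to cancel, leaving
\begin{equation*}
    \Prrepnext{\ell}{\ell'} = \frac{1}{\ell}\,\frac{1}{\boltzmann{\Vall}} \sum_{\windingsequence}\boltzmann{\left(\Vtospecificwinding{\ell'-1} + \Efromtospecificwinding{\ell'}{\ell} + \Vfromspecificwinding{\ell+1}\right)}.
\end{equation*}

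\textbf{Step 3: Factor the winding sum.} Since $\Vtospecificwinding{\ell'-1}$, $\Efromtospecificwinding{\ell'}{\ell}$, and $\Vfromspecificwinding{\ell+1}$ depend on \emph{disjoint} subsets of beads (particles $1,\dots,\ell'-1$; particles $\ell',\dots,\ell$; and particles $\ell+1,\dots,N$, respectively), the full winding sum factorizes into three independent sums over the relevant sub-collections of winding vectors. Applying the identities already established---\Cref{eq:cycle-energies-from-sum-cycle-energies-windingsequence} for the cycle factor, and the inductive identity $\boltzmann{\Vto{\particleup}} = \sum_{\windingsequencefromto{1}{\particleup}}{\boltzmann{\Vtospecificwinding{\particleup}}}$ together with its analog $\boltzmann{\Vfrom{\particledown}} = \sum_{\windingsequencefromto{\particledown}{N}}{\boltzmann{\Vfromspecificwinding{\particledown}}}$ shown in the proof of \Cref{lem:direct-link-probability}---each factor reduces to its winding-summed counterpart and the claim follows.

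\textbf{Main obstacle.} The only subtle point is justifying the factorization in Step 3, which requires a clean separation-of-variables argument: I would need to ensure that the winding vectors appearing in $\Vtospecificwinding{\ell'-1}$, $\Efromtospecificwinding{\ell'}{\ell}$, and $\Vfromspecificwinding{\ell+1}$ truly involve disjoint beads (which holds because cycles in the representative permutation never ``jump back'' and the closing spring of each cycle belongs to a unique particle), and then recognize that the two potential identities used have already been proved in the preceding theorems, so no new combinatorial work is needed beyond carefully tracking which winding vectors belong to which sub-sum.
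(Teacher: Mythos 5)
Your proposal is correct and follows essentially the same route as the paper: the paper's own proof is a one-line remark that the result "follows based on the proof of Theorem 3 in the SI of Ref.~\citenum{quadratic-pidmb} the same way that the proof of~\Cref{lem:direct-link-probability} follows from the proof of Theorem 2," and your three steps (fix windings and invoke the non-periodic theorem, marginalize via the law of total probability with $\Pr(\windingsequence) = \boltzmann{\Vtospecificwinding{N}}/\boltzmann{\Vall}$, then factor the winding sum over disjoint particle blocks using the already-established identities) are precisely the details that analogy entails. Your identification of the disjoint-bead factorization as the only point needing care is also accurate, and it holds for the reason you give.
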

\begin{proof}
    Follows based on the proof of Theorem 3 in the SI of Ref.~\citenum{quadratic-pidmb} the same way that the proof of~\Cref{lem:direct-link-probability} follows from the proof of Theorem 2 in the SI of Ref.~\citenum{quadratic-pidmb}.
\end{proof}

\subsection{Winding probability}

In this section, we prove the expressions for the winding probabilities in the bosonic algorithm, both for interior and exterior beads, which were used for the force evaluation.

From the definition in~\maineqref{eq:representative-distribution}{32} follows the expression for the probability of a particular winding vector:
\begin{equation}
    \Pr(\winding{\ell}{j}) 
    = 
    \sum_{\sigma} 
    \sum_{
        \windingsequence \setminus \winding{\ell}{j}
    } 
    \frac{
        \boltzmann{
            E^{
                \rep{\sigma}, 
                \windingsequence
            }
        }
    }{
    N!\cdot \boltzmann{\Vall}
    },
\end{equation}
where $\sum_{\windingsequence \setminus \winding{\ell}{j}}$ represents summation over all possible values of all winding vectors $\winding{\ellprime}{\jprime}$ where $\ellprime \neq \ell$ or $\jprime \neq j$.

\begin{theorem}
\label{thm:interior-winding-probability}
    The probability of a winding vector $\winding{\ell}{j}$ for an interior bead ($j \neq P$) is
\begin{equation}
    \Pr(\winding{\ell}{j}) 
    = 
    \frac{
        \boltzmann{
            \springenergyprefix \rdiffsquaredwinding{\ell}{j}{\ell}{j+1}
        }
    }{
        \sum_{\winding{\ell}{j}}
        \boltzmann{
            \springenergyprefix \rdiffsquaredwinding{\ell}{j}{\ell}{j+1}
        }
    },
\end{equation}
\end{theorem}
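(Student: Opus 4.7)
The plan is to start directly from the definition of $\Pr(\winding{\ell}{j})$, exploit the fact that the spring connecting bead $j$ to bead $j+1$ of particle $\ell$ is an interior spring (since $j \neq P$), and use the resulting factorization together with normalization.

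First, I would decompose the total spring energy for any configuration $(\rep{\sigma}, \windingsequence)$ into the contribution of the single spring $(\beadpos{\ell}{j} + \winding{\ell}{j} L - \beadpos{\ell}{j+1})^2$ and a remainder that does not involve $\winding{\ell}{j}$. Because $j \neq P$, the next bead in the ring is $\beadpos{\ell}{j+1}$ \emph{regardless} of the permutation $\rep{\sigma}$. Hence one can write
\begin{equation*}
    \Eperm{\rep{\sigma}, \windingsequence} = \springenergyprefix \rdiffsquaredwinding{\ell}{j}{\ell}{j+1} + \tilde{E}\bigl(\rep{\sigma}, \windingsequence \setminus \winding{\ell}{j}\bigr),
\end{equation*}
where $\tilde{E}$ is independent of $\winding{\ell}{j}$. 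Substituting into the definition of $\Pr(\winding{\ell}{j})$ lets me pull $\boltzmann{\springenergyprefix \rdiffsquaredwinding{\ell}{j}{\ell}{j+1}}$ outside the sums over $\sigma$ and the remaining winding vectors:
\begin{equation*}
    \Pr(\winding{\ell}{j}) = \boltzmann{\springenergyprefix \rdiffsquaredwinding{\ell}{j}{\ell}{j+1}} \cdot C(\pos_1, \ldots, \pos_N),
\end{equation*}
where $C$ collects $\frac{1}{\fact{N} \boltzmann{\Vall}} \sum_{\sigma} \sum_{\windingsequence \setminus \winding{\ell}{j}} \boltzmann{\tilde{E}}$ and does not depend on $\winding{\ell}{j}$.

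The final step is to fix the value of $C$ by normalization. Since $\sum_{\winding{\ell}{j}} \Pr(\winding{\ell}{j}) = 1$, we must have
\begin{equation*}
    C = \left(\sum_{\winding{\ell}{j}} \boltzmann{\springenergyprefix \rdiffsquaredwinding{\ell}{j}{\ell}{j+1}}\right)^{-1},
\end{equation*}
which yields exactly the claimed expression. The argument never requires computing $C$ explicitly from $\Vall$; normalization bypasses that.

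I do not expect a significant obstacle here, since the proof is essentially an observation about which terms in the energy depend on $\winding{\ell}{j}$. The only subtlety to articulate carefully is why the restriction $j \neq P$ is essential: for $j = P$ the neighboring bead is $\beadpos{\rep{\sigma}(\ell)}{1}$, so the spring energy becomes coupled to the permutation and the above factorization fails, which is precisely the phenomenon that forces the separate treatment of exterior beads via the joint probability in~\Cref{eq:windingpermute-joint-prob}.
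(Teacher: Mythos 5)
Your proposal is correct and follows essentially the same route as the paper's proof: both isolate the single spring term $\springenergyprefix \rdiffsquaredwinding{\ell}{j}{\ell}{j+1}$ from the rest of the energy, observe that for an interior bead this term is independent of the permutation and of all other winding vectors, and then factor it out of the sums. Your use of normalization to fix the constant $C$ is only a cosmetic variant of the paper's explicit cancellation, since knowing that $\Pr(\winding{\ell}{j})$ sums to one is the same fact as writing $\fact{N}\,\boltzmann{\Vall}$ as the full sum over configurations.
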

\begin{proof}%
First, note that
\begin{equation*}
    \Delta E^{\rep{\sigma}, \windingsequence} = E^{\rep{\sigma}, \windingsequence} - \springenergyprefix \rdiffsquaredwinding{\ell}{j}{\ell}{j+1},
\end{equation*}
is independent of $\winding{\ell}{j}$, by the definition of $E^{\rep{\sigma}, \windingsequence}$. 
Now, observe that the spring energy term involving $\winding{\ell}{j}$ does not depend on the winding vectors of other beads \emph{nor} the permutation (these only affect $\Delta E^{\rep{\sigma}, \windingsequence}$). Of course, this relies on the assumption that $\beadpos{\ell}{j}$ is an interior bead. Therefore, we can write
\begin{align*}
    \Pr(\winding{\ell}{j}) & 
    = 
    \frac{
        \boltzmann{
            \springenergyprefix \rdiffsquaredwinding{\ell}{j}{\ell}{j+1}
        }
        \sum_{\sigma}
        \sum_{
            \windingsequence \setminus \winding{\ell}{j}
        }
        \boltzmann{
            \Delta E^{
                \rep{\sigma}, 
                \windingsequence
            }
        }
    }{
        \sum_{
            \winding{\ell}{j}
        }
        \boltzmann{
            \springenergyprefix \rdiffsquaredwinding{\ell}{j}{\ell}{j+1}
        }
        \sum_{\sigma}
        \sum_{
            \windingsequence \setminus \winding{\ell}{j}
        }
        \boltzmann{
            \Delta E^{
                \rep{\sigma}, 
                \windingsequence
            }
        }
    }. \\
    \intertext{
    The terms that do not depend on $\winding{\ell}{j}$ cancel out, resulting in
    }
    &= 
        \frac{
        \boltzmann{
            \springenergyprefix \rdiffsquaredwinding{\ell}{j}{\ell}{j+1}
        }
    }{
        \sum_{
            \winding{\ell}{j}
        }
        \boltzmann{
            \springenergyprefix \rdiffsquaredwinding{\ell}{j}{\ell}{j+1}
        }
    }.
\end{align*}
\end{proof}

\begin{theorem}
\label{lemma:winding-probability-given-connection-same-spring}
The conditional probability $\Pr\left(\winding{\ell}{P} \mid \nextof{\ell}{\ellprime}{\rep{\sigma}}\right)$ can be expressed as
\begin{equation}
    \Pr\left(\winding{\ell}{P} \mid \nextof{\ell}{\ellprime}{\rep{\sigma}}\right) = 
    \frac{
        \boltzmann{
            \springenergyprefix \rdiffsquaredwinding{\ell}{P}{\ellprime}{1}
        }
    }{
    \sum_{
        \winding{\ell}{P} 
    } \boltzmann{
            \springenergyprefix \rdiffsquaredwinding{\ell}{P}{\ellprime}{1}
        }
    }.
\end{equation}
\end{theorem}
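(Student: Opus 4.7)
The plan is to adapt the argument of \Cref{thm:interior-winding-probability} to the conditional setting required for the exterior bead $P$. First I would rewrite the conditional probability via Bayes' rule,
\begin{equation*}
\Pr\left(\winding{\ell}{P} \mid \nextof{\ell}{\ellprime}{\rep{\sigma}}\right)
=
\frac{\Pr\left(\winding{\ell}{P}, \nextof{\ell}{\ellprime}{\rep{\sigma}}\right)}{\Pr\left(\nextof{\ell}{\ellprime}{\rep{\sigma}}\right)},
\end{equation*}
and express the numerator and denominator as sums of the Boltzmann weight from \Cref{eq:representative-distribution}: the numerator sums over all permutations $\sigma$ with $\rep{\sigma}(\ell) = \ellprime$ and over all choices of $\windingsequence$ whose bead--$P$ component of particle $\ell$ equals the prescribed $\winding{\ell}{P}$, while the denominator sums over the same permutations but over \emph{all} values of $\winding{\ell}{P}$.

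Next I would exploit the key structural fact that, by the definition of $\Eperm{\rep{\sigma}, \windingsequence}$ in \Cref{eq:eperm}, the vector $\winding{\ell}{P}$ enters the total spring energy only through the single closing spring from $\beadpos{\ell}{P}$ to $\beadpos{\rep{\sigma}(\ell)}{1}$. Once we fix $\rep{\sigma}(\ell) = \ellprime$, that spring contributes the single term $\springenergyprefix \rdiffsquaredwinding{\ell}{P}{\ellprime}{1}$, independent of all other windings and of the remaining cycle structure of $\sigma$. This lets me split
\begin{equation*}
\Eperm{\rep{\sigma}, \windingsequence}
=
\springenergyprefix \rdiffsquaredwinding{\ell}{P}{\ellprime}{1} + R(\rep{\sigma}, \windingsequence),
\end{equation*}
where the remainder $R$ collects all other springs and has no dependence on $\winding{\ell}{P}$. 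Substituting this decomposition into both numerator and denominator, the factor $\boltzmann{R}$ factors out of the sums over the remaining windings and over the permutations with $\rep{\sigma}(\ell) = \ellprime$; these sums are then identical in numerator and denominator and cancel, leaving precisely the claimed ratio.

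The main obstacle, which is mild, is to verify cleanly that no other term in $\Eperm{\rep{\sigma}, \windingsequence}$ secretly depends on $\winding{\ell}{P}$. This reduces to inspecting \Cref{eq:eperm}: each winding $\winding{\ell'}{j'}$ multiplies exactly one displacement $\beadpos{\ell'}{j'} + \winding{\ell'}{j'} L - \beadpos{\ell'}{j'+1}$, so distinct winding vectors are energetically decoupled. The only subtlety relative to the interior-bead case is that the identity of the ``next bead'' for bead $P$ of particle $\ell$ is determined by $\rep{\sigma}(\ell)$ rather than being $\beadpos{\ell}{j+1}$, and this is exactly what the conditioning on $\rep{\sigma}(\ell) = \ellprime$ fixes, so the interior-bead factorization carries through verbatim.
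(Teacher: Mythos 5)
Your proposal is correct and follows essentially the same route as the paper's proof: both write the conditional probability as a ratio of restricted Boltzmann sums over permutations with $\rep{\sigma}(\ell) = \ellprime$, decompose $\Eperm{\rep{\sigma},\windingsequence}$ into the single closing-spring term plus a remainder independent of $\winding{\ell}{P}$, and cancel the common factor. Your closing observation---that conditioning on the connectivity is precisely what removes the permutation dependence of the closing spring and lets the interior-bead factorization carry over---is exactly the key point the paper's proof makes.
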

\begin{proof}%
The probability of a winding vector $\winding{\ell}{P}$ corresponding to bead $P$ of particle $\ell$, given the connection $\nextof{\ell}{\ellprime}{\rep{\sigma}}$, is %
\begin{equation}
    \Pr\left(\winding{\ell}{P} \mid \nextof{\ell}{\ellprime}{\rep{\sigma}}\right) = 
    \sum_{
        \substack{
            \sigma \mbox{ s.t. } \\
            \nextof{\ell}{\ellprime}{\rep{\sigma}}
        }
    } 
    \sum_{
        \windingsequence \setminus \winding{\ell}{P}
    } 
    \frac{
        \boltzmann{
            E^{
                \rep{\sigma},
                \windingsequence
            }
        }
    }{
        \sum\limits_{
            \substack{
                \sigma \mbox{ s.t. } \\
                \nextof{\ell}{\ellprime}{\rep{\sigma}}
            }
        } 
        \sum\limits_{
            \windingsequence
        } 
        \boltzmann{
            E^{
                \rep{\sigma}, \windingsequence
            }
        }
    },
\end{equation}
where the sum over $\sigma$ is performed only over configurations such that they have the chosen connectivity of $\rep{\sigma}(\ell) = \ellprime$.

The quantity
\begin{equation*}
    \Delta E^{\rep{\sigma}, \windingsequence} = E^{\rep{\sigma}, \windingsequence} - \springenergyprefix \rdiffsquaredwinding{\ell}{P}{\ellprime}{1},
\end{equation*}
is independent of $\winding{\ell}{P}$, by the definition of $E^{\rep{\sigma}, \windingsequence}$. Conversely, the spring energy term involving $\winding{\ell}{P}$ does not depend on the winding vectors of other beads. %
Although the spring energy associated with $\winding{\ell}{P}$ generally depends on the permutation, in this case, it does not, as we consider only the subset of permutations where the connection that affects this specific spring is %
fixed. Therefore, we can write
\begin{align*}
    \Pr\left(\winding{\ell}{P} \mid \nextof{\ell}{\ellprime}{\rep{\sigma}}\right) & 
    = 
    \frac{
        \boltzmann{
            \springenergyprefix \rdiffsquaredwinding{\ell}{P}{\ellprime}{1}
        }
        \sum\limits_{
            \substack{
                \sigma \mbox{ s.t. } \\
                \nextof{\ell}{\ellprime}{\rep{\sigma}}
            }
        } 
        \sum_{
            \windingsequence \setminus \winding{\ell}{P}
        }
        \boltzmann{
            \Delta E^{\rep{\sigma}, \windingsequence}
        }
    }{
        \sum_{
            \winding{\ell}{P}
        }
        \boltzmann{
            \springenergyprefix \rdiffsquaredwinding{\ell}{P}{\ellprime}{1}
        }
        \sum\limits_{
            \substack{
                \sigma \mbox{ s.t. } \\
                \nextof{\ell}{\ellprime}{\rep{\sigma}}
            }
        } 
        \sum_{
            \windingsequence \setminus \winding{\ell}{P}
        }
        \boltzmann{
            \Delta E^{\rep{\sigma}, \windingsequence}
        }
    },
    \\
    \intertext{
    The terms that do not depend on $\winding{\ell}{P}$ cancel out, resulting in
    }
    &= 
        \frac{
        \boltzmann{
            \springenergyprefix \rdiffsquaredwinding{\ell}{P}{\ellprime}{1}
        }
    }{
        \sum_{
            \winding{\ell}{P}
        }
        \boltzmann{
            \springenergyprefix \rdiffsquaredwinding{\ell}{P}{\ellprime}{1}
        }
    }.
\end{align*}

\end{proof}  \fi

\end{document}